\newcommand{\V}{\mathsf{v}}
\renewcommand{\H}{\mathsf{h}}
\newcommand{\la}{\langle}
\newcommand{\ra}{\rangle}
\newcommand{\df}[1]{\textit{#1}} 
\newcommand{\eps}{\epsilon}
\newcommand{\phprim}{S_\pi}
\newcommand{\pvprim}{W_\pi}
\newcommand{\projx}{X_\G}
\newcommand{\projy}{Y_\G}
\renewcommand{\b}{\underline{y}_\G}
\newcommand{\bv}{\underline{y}_{\G_\V}}
\newcommand{\bo}{\b}
\renewcommand{\t}{\overline{y}_\G}
\renewcommand{\l}{\underline{x}_\G}
\newcommand{\lh}{\underline{x}_{\G_\H}}
\renewcommand{\r}{\overline{x}_\G}
\newcommand{\SW}{\texttt{SW}\xspace}
\newcommand{\NW}{\texttt{NW}\xspace}
\newcommand{\SE}{\texttt{SE}\xspace}
\newcommand{\NE}{\texttt{NE}\xspace}
\newcommand{\Gp}{\G'}
\newcommand{\Gvp}{\G'_\V}
\newcommand{\Ghp}{\G'_\H}
\newcommand{\bp}{\underline{y}_{\G'}}
\newcommand{\tp}{\overline{y}_{\G'}}
\newcommand{\lp}{\underline{x}_{\G'}}
\newcommand{\rp}{\overline{x}_{\G'}}
\newcommand{\pv}{P_\V}
\newcommand{\ph}{P_\H}
\newcommand{\pvph}{\la \pv,\ph\ra}
\newcommand{\gvgh}{\la G_\V,G_\H\ra}
\newcommand{\G}{\Gamma}
\newcommand{\Gv}{\G_\V}
\newcommand{\Gh}{\G_\H}
\newcommand{\gv}{G_\V}
\newcommand{\gh}{G_\H}
\newcommand{\lsvrpaths}{\textsf{LsvrPaths}\xspace}
\newcommand{\NP}{\textsf{NP}}
\renewcommand{\llcorner}{\tikz\draw (0,0) -- ++(0,-1.5ex) -- ++(1.5ex,0);}
\renewcommand{\lrcorner}{\tikz\draw (0,0) -- ++(1.5ex,0) -- ++(0ex,1.5ex);}
\renewcommand{\ulcorner}{\tikz\draw (0,0) -- ++(-1.5ex,0) -- ++(0,-1.5ex);}
\renewcommand{\urcorner}{\tikz\draw (0,0) -- ++(1.5ex,0) -- ++(0,-1.5ex);}
\newcommand{\lset}{\{\llcorner,\lrcorner,\urcorner,\ulcorner\}}
\newcommand{\spider}{\tikz[scale=0.2]{%
\draw (0,0) -- (-0.5,-0.8660254) -- (-1,0); %
\draw (0,0) -- (-0.5,0.8660254) -- (0.5,0.8660254);%
\draw (0,0) -- (1,0) -- (0.5,-0.8660254);%
}}
\newcommand{\add}[1]{{#1}}
\newcommand{\comment}[1]{}
\newtheorem{lemma}{Lemma}
\newtheorem{theorem}{Theorem}
\newtheorem{corollary}{Corollary}
\newtheorem{definition}{Definition}
\newtheorem{claim}{Claim}
\title{Simultaneous Visibility Representations of Undirected Pairs of Graphs\thanks{Supported by Canada NSERC Discovery Grant and Undergraduate Student Research Awards. }}
\author{Ben Chugg\thanks{Stanford University, United States, \texttt{benchugg@stanford.edu}}
\and
William S. Evans\thanks{University of British Columbia, Canada, \texttt{will@cs.ubc.ca}}
\and
Kelvin Wong\thanks{University of Toronto, Canada, \texttt{kelvinwong@cs.toronto.edu}}}
\date{}
\begin{document}
\thispagestyle{empty}
\maketitle

\begin{abstract}
We consider the problem of determining if a pair of undirected graphs $\la G_\V, G_\H \ra$,
which share the same vertex set, has a representation using opaque geometric shapes for vertices,
and vertical (respectively, horizontal) visibility between shapes to determine the edges of $G_\V$ (respectively, $G_\H$).
While such a simultaneous visibility representation of two graphs can be determined efficiently if the direction of the required visibility for each edge is provided (and the vertex shapes are sufficiently simple), it was unclear if edge direction is critical for efficiency.
Here, an edge directed from $u$ to $v$ implies that the shape representing $u$ is below (respectively, left of) the shape for $v$ in $G_\V$ (respectively, $G_\H$).
We show that the problem is \NP-complete without that information, even for graphs that are only slightly more complex than paths.
In addition, we characterize which pairs of paths have simultaneous visibility representations using fixed orientation L-shapes.
This narrows the range of possible graph families for which determining simultaneous visibility representation is non-trivial yet not \NP-hard.
\end{abstract}

\textbf{Keywords:} Graph Drawing, Visibility Representation, NP-hardness

\section{Introduction}
A \emph{visibility representation} $\G$ of a graph $G = (V,E)$ is a set of
disjoint geometric objects $\{ \G(v) \;|\; v \in V \}$ representing vertices
chosen from a family of allowed objects (e.g., axis-aligned
rectangles in the plane) where $\G(u)$ sees $\G(v)$ if and only if $uv \in E$.
Typically, the meaning of ``sees'' is that there exists a line segment
(perhaps axis-aligned, perhaps positive width)
from $\G(u)$ to $\G(v)$ that does not intersect
$\G(w)$ for any other $w \in V$; such a line segment is called a
\emph{line-of-sight}.
Many different classes of visibility representations may be defined by
changing the family of allowed objects and the meaning of ``sees.''
For example,
\emph{bar visibility representations} (BVRs) use horizontal line segments as vertices and vertical lines-of-sight for edges~\cite{dean2003unit,duchet83,wismath85,tamassia86,dean2007bar,felsner2008parameters,hartke2007further};
\emph{rectangle visibility representations} (RVRs) use (solid) axis-aligned rectangles and axis-aligned lines-of-sight~\cite{dean1994rectangle,wismath89,shermer96,bose96,cobos1995visibility,hutchinson1999representations};
and
\emph{unit square visibility representations} (USVRs) use axis-aligned unit squares and axis-aligned lines-of-sight~\cite{casel17}.
The popularity of this type of graph representation lies in its
potential applicability to problems in VLSI design and the production
of readable representations of planar and non-planar graphs.
Determining which graphs or families of graphs have visibility
representations of a particular type is a fascinating area of
research.

\begin{figure}[t]
\centering
\begin{subfigure}[t]{0.4\textwidth}
\centering
\includegraphics{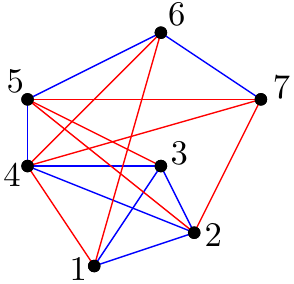}
\end{subfigure}
\begin{subfigure}[t]{0.4\textwidth}
\centering
\includegraphics{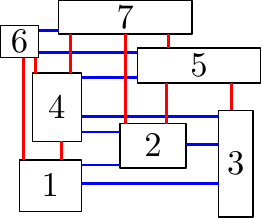}
\end{subfigure}
\caption{Left: A pair of graphs $\gvgh$ on the same vertex set. Red (light) edges are those of $\gv$ while blue (darker) are those of $\gh$. Right: A simultaneous visibility representation of $\gvgh$ using rectangles. 
}
\label{fig:svr_example}
\end{figure}

\subsection{Contribution}

Our focus in this work is on the \emph{simultaneous} visibility
representation of pairs of graphs that
share the same vertex set (see Fig.~\ref{fig:svr_example}).
A \emph{simultaneous visibility representation} (SVR) of $G_\V=(V,E_\V)$ and
$G_\H=(V,E_\H)$ is a visibility representation $\G$ that,
using vertical lines-of-sight,
represents $G_\V$
and,
using horizontal lines-of-sight,
represents $G_\H$.
Streinu and Whitesides~\cite{streinu03} describe a beautiful connection between a pair of \emph{directed} planar graphs $\la \overrightarrow{G}_\V,
\overrightarrow{G}_\H \ra$ and their planar duals that determines if the pair has a directed simultaneous visibility representation using rectangles (a directed RSVR) $\G$,
where an edge directed from $u$ to $v$ in $\overrightarrow{G}_\V$ or $\overrightarrow{G}_\H$
is realized by a
low-to-high or left-to-right, respectively, line-of-sight from $\G(u)$ to $\G(v)$.
Evans et al.~\cite{evans16} extended this to the family of
geometric objects called 
\emph{L-shapes},
which are the union of two axis-aligned segments in the plane that
share a common endpoint and come in four orientations: $\lset$.
They gave a polynomial time algorithm for determining if a pair of
directed graphs $\la \overrightarrow{G}_\V,
\overrightarrow{G}_\H \ra$
has a directed simultaneous visibility
representation using L-shapes $\G$
in which the orientation of $\G(v)$ is given by $\Phi:V \rightarrow
\lset$ (a directed $\Phi$-LSVR).

The complexity of determining if a pair of \emph{undirected} graphs
has a simultaneous visibility representation using L-shapes
was stated as an open problem~\cite{evans16}.
In this paper, we show (Section~\ref{sec:np-complete}) that the problem
is \NP-complete.
What is surprising about this result is the simplicity of the graphs
for which the problem is hard:
For L-shapes (and many other families of shapes including rectangles),
one graph can be a set of disjoint paths
and the other a set of disjoint copies of a tree with 3 leaves connected to a root by paths of length 2 (i.e. $\spider$).
For unit squares
(and for translates of any specified connected shape with positive width and height),
one graph can
be a set of disjoint paths and the other a set of disjoint claws ($K_{1,3}$).

This limits the families of graphs for which we can reasonably hope to
efficiently determine a simultaneous visibility representation.
We describe a linear time algorithm (Section~\ref{sec:paths}) that
determines if a pair of undirected \emph{paths} has
a simultaneous visibility representation using L-shapes, all with the
same orientation $\{\llcorner\}$ (an LSVR).
The algorithm is quite simple but relies on characterizing those pairs
of paths for which such a representation is possible.
The characterization of such pairs of paths for representations using
rectangles, unit squares, and L-shapes with more than one orientation is easier.

\subsection{Related Work}

Our work has aspects of both visibility representation and simultaneous geometric graph embedding (SGE).
SGE is the problem of deciding, given a set of planar graphs on the same set of vertices, whether the vertices can be placed in the plane so that each graph has a straight-line drawing on the placed vertices.
As in our problem, SGE, which is \NP-hard~\cite{estrella07}, asks to represent several specified graphs using one common vertex set representation.
However, the hardness result for SGE does not directly imply hardness of deciding simultaneous visibility representation.
Similarly, deciding if a graph has an RVR~\cite{shermer96} or a USVR~\cite{casel17} is \NP-hard, but since the input does not specify which edges should be realized as vertical versus horizontal lines-of-sight, the problems are quite different.
Choosing how the graph should be split into vertical and horizontal parts is an additional opportunity (or burden) for deciding if these representations exist.

Rather than requiring the visibility representation to partition the edges of the graph in a prescribed manner between vertical and horizontal visibilities, Biedl et al.~\cite{biedl2018embedding} require that the visibility edges (lines-of-sight) obey the same embedding as a prescribed embedding of the original graph, which may include edge crossings.
They can decide if such a restricted RVR exists in polynomial time and in linear time if the graph is 1-planar.\footnote{\add{A graph is 1-planar if it has a planar embedding in which each edge crosses at most one other.}}
Di Giacomo et al.~\cite{ortho18} show that deciding if a similarly restricted ortho-polygon\footnote{a polygon whose edges are axis-aligned.} visibility representation exists 
for an embedded graph takes polynomial time as well.

\section{Preliminaries}
\label{sec:prelims}
In this paper, we will assume that vertex shapes are connected and
closed (rather than open) sets in the plane and that
lines-of-sight are 0-width (rather than positive-width) and exist
between two shapes if and only if the corresponding vertices are
connected by an edge.
This implies that $G_\V$ and $G_\H$ must have \emph{strong-visibility
  representations}~\cite{tamassia86} to have a simultaneous visibility representation.
For visibility graphs, these choices make a difference since, for
example, $K_{2,4}$ can be represented if lines-of-sight are
positive-width (an \emph{$\epsilon$-visibility representation}) but
does not have a strong-visibility representation~\cite{tamassia86}.
However, for our results, we could adopt either model with only minor
modifications to our proofs. 
In particular, we could allow the vertical and/or horizontal projection of a shape to be an interval that is closed or open on either end (the $\epsilon$-visibility model) rather than a segment that is closed on both ends (the strong-visibility model).  The proofs of Properties~\ref{prop:ThinOverlap} and \ref{prop:nextBarsSameSide} (below), for example, would remain the same.

Let $\G$ be an SVR of $\gvgh$. Given a subset $S \subset V$,
we let $\G(S) = \bigcup_{v \in S}\G(v)$.
For a vertex $v \in V$, let $X_\Gamma(v)$ and $Y_\Gamma(v)$ be the
orthogonal projections of $\Gamma(v)$ onto the $x$-axis and $y$-axis
respectively.
For a set of vertices $S \subset V$, let
$X_\Gamma(S) = \bigcup_{v  \in S} X_\Gamma(v)$ and
$Y_\Gamma(S) = \bigcup_{v  \in S} Y_\Gamma(v)$. 
Set $\l(v) = \min X_\G(v)$, $\r(v) = \max X_\G(v)$, $\b(v) = \min Y_\G(v)$, and $\t(v) = \max Y_\G(v)$. 
We write $Y_\G(u) \leq Y_\G(v)$ if $\t(u) \leq \b(v)$ and $\projx(u)\leq \projx(v)$ if $\r(u)\leq \l(v)$. We also use the shorthand $[n]$ for $\{1,2,\dots,n\}$.

\add{When discussing the relative positions of shapes in an SVR $\Gamma$, we use above/below (resp. right/left) when two shapes can be separated by a horizontal (resp. vertical) line.
}

We state two basic properties visibility representations $ \G $ of a graph $ G = (V, E) $. 
For brevity, these properties are stated for vertical visibility representations only but also hold for horizontal visibility representations by symmetry. 


\begin{restatable}{property}{ThinOverlap}
\label{prop:ThinOverlap}
For $S_1, S_2\subset V$ and \add{a point} $x \in X_\Gamma(S_1) \cap X_\Gamma(S_2)$, there exists a
path $u = u_1, \dots, u_k = v$ in $G$ 
for some $u \in S_1$ and $v \in S_2$ such that $x \in X_\Gamma(u_i)$ for all $i \in [k]$.
\end{restatable}
\begin{proof}
Consider the intersection of $\Gamma$ and the infinite vertical
line $x \times (-\infty,+\infty)$.
Since this line intersects both $\Gamma(S_1)$ and $\Gamma(S_2)$,
there must be two vertices $u \in S_1$ and $v \in S_2$ such that
$\Gamma(u)$ and $\Gamma(v)$ both intersect the line.
Let $u=u_0,u_1, \dots, u_k=v$ be the sequence of vertices in $V$ that
intersect the line in order along the line from $\Gamma(u)$ to
$\Gamma(v)$.  $\Gamma(u_i)$ and $\Gamma(u_{i+1})$ have an unblocked
vertical visibility segment between them for all $1\leq i < k$, which
implies a path between $u$ and $v$ that
connects $S_1$ and $S_2$ in $G$. 
\end{proof}

\begin{restatable}{property}{NextBarsSameSide}
\label{prop:nextBarsSameSide}
Let $ u_1, \ldots, u_\ell $ be the only path from $u_1$ to $u_\ell$ in $ G $.
If $ \Gamma(u_i) $ and $ \Gamma(u_k) $ are both above or both below $ \Gamma(u_j) $ for some $1 \leq i < j < k \leq \ell$,
then $ X_\Gamma(\{u_1, \ldots, u_i\}) \cap X_\Gamma(\{u_k, \ldots, u_\ell \}) = \emptyset $.
\end{restatable}
\begin{proof}
	If $ x \in X_\Gamma(\{u_1, \ldots, u_i\}) \cap X_\Gamma(\{u_k, \ldots, u_\ell \}) $, then by Property~\ref{prop:ThinOverlap}, there exists a path from $ u_i $ to $ u_k $ (following the vertical line through $ x $) that, since $ \Gamma(u_i) $ and $ \Gamma(u_k) $ are both above or both below $ \Gamma(u_j) $, does not include $ u_j $, a contradiction.
\end{proof}

\section{Hardness}
\label{sec:np-complete}

In this section, we study the complexity of determining if a pair of undirected graphs has an SVR.
We first consider the problem of determining SVRs using unit squares (Section~\ref{sec:ssvr-recognition}).
Then, we discuss how our results can generalize to other connected shapes as well (Section~\ref{sec:hardness-generalization}).
In the case of L-shapes, our results settle an open question of Evans et al.~\cite{evans16}.

\begin{figure}[t]
\centering
\begin{subfigure}[t]{0.45\textwidth}
\centering
\includegraphics{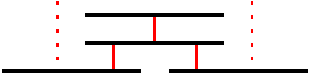}
\end{subfigure}
\hfill
\begin{subfigure}[t]{0.45\textwidth}
\centering
\includegraphics{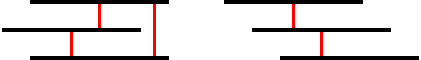}
\end{subfigure}
\caption[LOF caption]{Illustrations of
Lemma~\ref{le:Nestedness} (left):
A cut vertex will not \emph{nest}\protect{\footnotemark}~at
most two components induced by its removal;
and Lemma~\ref{le:NoTwist} (right):
Disjoint subgraphs must not overlap.}
\label{fi:hardness-lemma}
\end{figure}
\footnotetext{$\Gamma(A) $ is \emph{nested} in $ \Gamma(B) $ if $ X_\Gamma(A) \supseteq X_\Gamma(B) $}

To begin, we first state two lemmas that characterize how the gadgets in our hardness proofs can be drawn.
See Fig.~\ref{fi:hardness-lemma}
for illustrations of
the lemmas.

\begin{restatable}{lemma}{Nestedness}
\label{le:Nestedness}
Let $ G = (V, E) $ be a connected graph with a (vertical) visibility representation $\Gamma$.
If $ u \in V $ is a cut vertex whose removal
creates components $ C_1, \ldots, C_k $,
then $X_\Gamma(C_i) \not\subseteq X_\Gamma(u)$ for at most two
components.
\end{restatable}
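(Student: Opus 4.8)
The plan is to combine connectivity of $G$ with Property~\ref{prop:ThinOverlap}. I would first record an elementary fact: for any connected subgraph $H$ of $G$, the set $X_\Gamma(H)$ is an interval. Every edge $ab$ of $G$ is realized by a vertical line-of-sight, so $X_\Gamma(a)\cap X_\Gamma(b)\neq\emptyset$; walking along a walk that visits all vertices of $H$ exhibits $X_\Gamma(H)$ as a union of closed intervals in which consecutive members overlap, hence an interval. In particular $X_\Gamma(u)=[\l(u),\r(u)]$, and, since $G$ is connected and $u$ is a cut vertex, each $C_i$ is joined to $u$ by an edge, so $X_\Gamma(C_i\cup\{u\})$ is an interval containing $X_\Gamma(u)$.

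Next I would set up a one-sided reduction. Fix a component $C_i$ with $X_\Gamma(C_i)\not\subseteq X_\Gamma(u)$. Then some vertex $v\in C_i$ has $X_\Gamma(v)\not\subseteq[\l(u),\r(u)]$, i.e.\ $\l(v)<\l(u)$ or $\r(v)>\r(u)$; accordingly call $C_i$ \emph{left-extending} or \emph{right-extending} (a component may be both, and every component $C_i$ with $X_\Gamma(C_i)\not\subseteq X_\Gamma(u)$ is of at least one of these two types). It therefore suffices to show that at most one component is left-extending, since the right-extending case is symmetric.

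For this, suppose for contradiction that distinct components $C_i$ and $C_j$ are both left-extending, witnessed by $v_i\in C_i$ and $v_j\in C_j$ with $\l(v_i)<\l(u)$ and $\l(v_j)<\l(u)$. Since $X_\Gamma(C_i\cup\{u\})$ is an interval containing both $\l(v_i)$ and $\l(u)$, it contains $[\l(v_i),\l(u))$; this subinterval is disjoint from $X_\Gamma(u)=[\l(u),\r(u)]$, so it lies entirely in $X_\Gamma(C_i)$, and likewise $[\l(v_j),\l(u))\subseteq X_\Gamma(C_j)$. Choosing any $x$ with $\max\{\l(v_i),\l(v_j)\}\le x<\l(u)$ gives $x\in X_\Gamma(C_i)\cap X_\Gamma(C_j)$ while $x\notin X_\Gamma(u)$. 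Applying Property~\ref{prop:ThinOverlap} with $S_1=C_i$ and $S_2=C_j$ then produces a path in $G$ from a vertex of $C_i$ to a vertex of $C_j$ all of whose shapes have $x$ in their $x$-projection; since $x\notin X_\Gamma(u)$ this path avoids $u$, contradicting that $u$ separates $C_i$ from $C_j$. Hence there is at most one left-extending and at most one right-extending component, so at most two components $C_i$ satisfy $X_\Gamma(C_i)\not\subseteq X_\Gamma(u)$.

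Each step is short; the only delicate points are the interval fact of the first paragraph and the bookkeeping that places a coordinate $x$ just to the left of $\l(u)$ simultaneously inside $X_\Gamma(C_i)$ and $X_\Gamma(C_j)$ yet outside $X_\Gamma(u)$ — once that is arranged, Property~\ref{prop:ThinOverlap} finishes the argument immediately. I would also note in passing that the proof uses only $\l(u)=\min X_\Gamma(u)$ (and the symmetric fact on the right), so it carries over unchanged to the $\epsilon$-visibility model.
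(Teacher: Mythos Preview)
Your proof is correct and follows essentially the same approach as the paper: both arguments use that each $X_\Gamma(C_i)$ is an interval meeting $X_\Gamma(u)$, observe that a non-nested component must extend past the left or right endpoint of $X_\Gamma(u)$, pigeonhole two such components onto the same side, and then invoke Property~\ref{prop:ThinOverlap} at a point $x$ outside $X_\Gamma(u)$ to obtain a $u$-avoiding path between two components. Your version is just more explicit about the interval bookkeeping, while the paper states the pigeonhole step more tersely.
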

\begin{proof}
Since $G$ is connected, $X_\Gamma(C_i)$ intersects $X_\Gamma(u)$
and if in addition $X_\Gamma(C_i) \not\subseteq X_\Gamma(u)$, then, since $C_i$ is connected, $X_\Gamma(C_i)$ is a contiguous interval
that strictly contains an endpoint of $X_\Gamma(u)$.
If three components have this property, then for two of
them, say $C_i$ and $C_j$, $X_\Gamma(C_i)$ and
$X_\Gamma(C_j)$ strictly contain the same endpoint and thus contain a
point $x \not\in X_\Gamma(u)$.
By Property~\ref{prop:ThinOverlap}, $G$
contains a path (following the vertical line through $x$) between $C_i$ and $C_j$ that does not contain $u$, a
contradiction.
\end{proof}

\begin{restatable}{lemma}{NoTwist}
\label{le:NoTwist}
Let $ G = (V, E) $ be a graph with a (vertical) visibility representation $ \Gamma $.
If $ C_1 $ and $ C_2 $ are 
components in $ G $, then either $ X_\Gamma(C_1) > X_\Gamma(C_2) $\footnote{$ X_\Gamma(A) > X_\Gamma(B) $ means $ X_\Gamma(a) > X_\Gamma(b) $ for all $ a \in A $, $ b \in B $}
\add{or $ X_\Gamma(C_2) > X_\Gamma(C_1) $.}
\end{restatable}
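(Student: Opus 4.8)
The plan is to prove the statement by contradiction, with Property~\ref{prop:ThinOverlap} doing essentially all of the work. Assume that neither $X_\Gamma(C_1) > X_\Gamma(C_2)$ nor $X_\Gamma(C_2) > X_\Gamma(C_1)$ holds (where $C_1$ and $C_2$ are distinct connected components, the intended reading), and derive a path in $G$ joining a vertex of $C_1$ to a vertex of $C_2$ — impossible, since they lie in different components.

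The first step is to show that $X_\Gamma(C_i)$ is a single closed interval of $\R$ for $i \in \{1,2\}$. This follows from connectedness of $C_i$: for any $u, v \in C_i$ there is a path $u = w_0, w_1, \dots, w_m = v$ in $C_i$, each consecutive pair $w_j w_{j+1}$ is an edge of $G$ and hence has an unblocked vertical line-of-sight, so $X_\Gamma(w_j) \cap X_\Gamma(w_{j+1}) \neq \emptyset$; therefore $\bigcup_j X_\Gamma(w_j)$ is an interval containing both $X_\Gamma(u)$ and $X_\Gamma(v)$. Taking the union of such intervals over all pairs $u,v$ shows $X_\Gamma(C_i)$ is itself an interval, and it is closed since each $X_\Gamma(w)$ is.

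The second step uses the elementary fact that two closed intervals of $\R$ that are not strictly separated must meet: if $X_\Gamma(C_1) \cap X_\Gamma(C_2) = \emptyset$ then one of the two intervals lies entirely to the right of the other, which is precisely one of the two excluded alternatives. (Note that the definition of $X_\Gamma(C_1) > X_\Gamma(C_2)$ is strict on the $\l(\cdot)$/$\r(\cdot)$ coordinates, so intervals that touch at a single point fall into the "meet" case rather than the "strictly separated" case.) Hence, under our assumption, there is a point $x \in X_\Gamma(C_1) \cap X_\Gamma(C_2)$. Applying Property~\ref{prop:ThinOverlap} with $S_1 = C_1$, $S_2 = C_2$ and this $x$ yields a path in $G$ from some $u \in C_1$ to some $v \in C_2$, contradicting that $C_1$ and $C_2$ are distinct components; this finishes the proof.

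The only point that needs care is the boundary case in the second step — that intervals meeting only at an endpoint still supply a common point $x$, and that Property~\ref{prop:ThinOverlap} remains valid for such an $x$. The latter holds because shapes are closed sets, so the infinite vertical line through $x$ genuinely intersects any shape whose projection has $x$ as an endpoint. Beyond that, the lemma is a direct consequence of Property~\ref{prop:ThinOverlap}, so I do not anticipate any substantive obstacle.
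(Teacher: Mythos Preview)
Your proposal is correct and follows essentially the same approach as the paper: both argue that each $X_\Gamma(C_i)$ is a contiguous interval (by connectedness of $C_i$) and then invoke Property~\ref{prop:ThinOverlap} to rule out any common point, forcing one interval strictly to the left of the other. Your treatment is simply more detailed, in particular you spell out the interval argument and flag the endpoint-touching boundary case, which the paper leaves implicit.
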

\begin{proof}
Since each component $ C_i $ is connected, its $ x $-projection $ X_\Gamma(C_i) $ forms a contiguous interval.
And yet since $ C_1 $ and $ C_2 $ are disconnected in $ G $, by Property~\ref{prop:ThinOverlap}, $ X_\Gamma(C_1) \cap X_\Gamma(C_2) = \emptyset $.
Thus either $ X_\Gamma(C_1) > X_\Gamma(C_2) $
\add{or $ X_\Gamma(C_2) > X_\Gamma(C_1) $.}
\end{proof}

\subsection{USSVR recognition}
\label{sec:ssvr-recognition}
We first prove that determining if a pair of undirected graphs has a \emph{simultaneous visibility representation using unit squares} (USSVR) is \NP-complete.


\begin{theorem}\label{thm:ssvr-recognition-is-hard}
Deciding if a pair of undirected graphs has a USSVR is \NP-complete.
\end{theorem}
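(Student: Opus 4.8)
The plan is to show membership in \NP\ and then \NP-hardness by reduction from a suitable \NP-complete problem. Membership is routine: given a candidate placement of unit squares, one guesses coordinates with polynomially many bits (by a standard perturbation/grid argument the vertex positions can be taken to lie on a polynomial-size grid), and then verifies in polynomial time that every edge of $G_\V$ is realized by an unobstructed vertical line-of-sight, every edge of $G_\H$ by an unobstructed horizontal one, and that no non-edges are realized and no two squares overlap. So the bulk of the work is the hardness reduction.

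For hardness I would reduce from a constrained variant of planar 3-SAT or, more naturally given the geometry, from a one-dimensional ordering/betweenness-type problem or from \textsc{Monotone Planar 3-SAT}; the key point is that an SVR forces strong one-dimensional structure on the $x$- and $y$-orderings of the squares. The gadget design should use exactly the ingredients the paper has set up: $G_\V$ will be a disjoint union of paths and $G_\H$ a disjoint union of claws $K_{1,3}$ (as announced in the introduction). The role of Lemma~\ref{le:NoTwist} is that the $x$-projections of the distinct path-components of $G_\V$ must appear in some total left-to-right order with disjoint intervals (and symmetrically for $y$-projections of the claw-components of $G_\H$), which lets a "frame" of long components enforce a global coordinate system. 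The role of Lemma~\ref{le:Nestedness} together with Property~\ref{prop:nextBarsSameSide} is to pin down the internal layout of each path/claw: along a path whose squares alternate up and down, the $x$-coordinates must be monotone, so a path acts as a rigid chain transmitting an ordering constraint, and a claw's center square can "see" its three leaves only if the leaves straddle it in a forced pattern. I would build a variable gadget whose only freedom is a binary choice (which of two squares lies above/below, equivalently left/right), a clause gadget that is simultaneously representable iff at least one incident literal is set correctly, and wires (paths) that copy a variable's choice to its clauses while respecting planarity of the SAT instance so that wires don't geometrically collide — here Lemma~\ref{le:NoTwist} again guarantees disjoint components stay in disjoint strips, which is what makes the planar embedding translate into a non-crossing geometric layout.

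The main obstacle I expect is the geometric rigidity of the gadgets: unit squares are very constrained, so it is easy to accidentally create an SVR that "cheats" by using an unintended vertical or horizontal line-of-sight, or to make a gadget that is unsatisfiable for parity reasons unrelated to the SAT instance. Concretely, the delicate step is proving the "only if" direction — that any USSVR of the constructed pair really does induce a satisfying assignment — which requires a careful case analysis showing the claimed layouts are the \emph{only} ones possible; this is exactly where Lemmas~\ref{le:Nestedness} and \ref{le:NoTwist} and Property~\ref{prop:nextBarsSameSide} do the heavy lifting by eliminating alternative configurations. A secondary concern is keeping the gadgets of bounded size and the total construction polynomial, and ensuring the unit-square constraint (equal width and height, which couples the $x$- and $y$-directions) is actually exploited rather than fought against; I would design the variable and clause gadgets so that this coupling is what forces the binary choice, making unit squares a feature of the reduction rather than an inconvenience.
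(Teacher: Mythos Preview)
Your high-level framework is sound --- \NP\ membership via a polynomial-size grid certificate, hardness via a SAT variant, and the intended use of Lemmas~\ref{le:Nestedness} and~\ref{le:NoTwist} and Property~\ref{prop:nextBarsSameSide} are all exactly the tools the paper deploys. But the concrete reduction you sketch diverges from the paper's in ways that both overcomplicate the argument and leave a real gap.

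The paper reduces from \textsc{Monotone Not-All-Equal 3SAT}, not a planar variant, and the construction is far simpler than the frame-plus-wires machinery you envision. There are no wires and no frame: each variable's consistency gadget is just a single path in $G_\H$ joining all occurrences of that variable, and the clause consistency gadget is a single path through all clause vertices. Lemma~\ref{le:NoTwist} then forces these paths to occupy disjoint horizontal strips, which globally determines a variable's truth value by whether its strip lies above or below the clause strip --- no planarity assumption is needed, and your worry about wires ``geometrically colliding'' simply does not arise.

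The genuine gap in your plan is the clause gadget semantics. You want a claw whose representability encodes ``at least one literal true,'' but that is not what a claw gives you here. The paper's claw (in $G_\V$, with the clause vertex central) combined with the unit-square constraint encodes \emph{not-all-equal}: Lemma~\ref{le:Nestedness} forces at least one leaf $\ell$ to have $X_\Gamma(\ell)\subseteq X_\Gamma(c)$, and because both are unit squares this is actually equality $X_\Gamma(\ell)=X_\Gamma(c)$, so $\ell$ overlaps horizontally with the other two leaves. Property~\ref{prop:nextBarsSameSide} then forces $\ell$ to be on the opposite side of $c$ from each of them. Hence exactly one or exactly two leaves lie above $c$ --- the NAE condition. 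If you try to extract ordinary OR semantics from the same claw you will not succeed without additional structure, and that is the missing idea in your proposal.
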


For our proof, we reduce from the \NP-complete problem of Monotone Not-All-Equal 3SAT~\cite{schaefer78}.
This variant of 3SAT stipulates that every clause has three positive literals of which exactly one or two must be satisfied.

\begin{figure}
\centering
\includegraphics[scale=1]{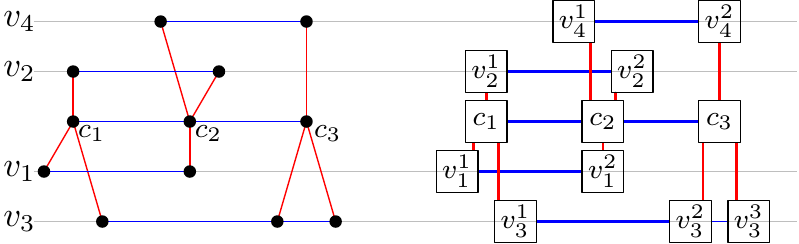}
\caption{
Left: $G_\V$ (red) and $G_\H$ (blue) for Monotone
Not-All-Equal 3SAT instance $\varphi = (v_1 \vee v_2 \vee v_3)(v_4
\vee v_1 \vee v_2)(v_3 \vee v_4 \vee v_3)$.
Right: A USSVR of $\la G_\V , G_\H \ra$ encoding
the truth assignment $ v_2, v_4 = T $ and $ v_1, v_3 = F $.}%
\label{fi:ssvr}
\end{figure}

\paragraph{Construction.}
Let $\varphi$ be an instance of Monotone Not-All-Equal 3SAT with a
set $\mathcal{C}$ of $m$
clauses and a set $\mathcal{V}$ of $n \leq 3m$ variables.
\add{We construct a pair of graphs $ \langle G_\V, G_\H \rangle $
such that each clause and literal in $ \varphi $ is represented by a vertex.}
All clauses form a path in $G_\H$ in the order of their appearance in
$\varphi$; creating one \emph{clause consistency gadget} $G_\H(\mathcal{C})$.
The same holds for all occurrences of literals representing the same
variable; creating $n$ \emph{variable consistency gadgets} $G_\H(v)$
for $v \in \mathcal{V}$.
All occurrences of literals in a clause form a $K_{1,3}$ in $G_\V$, where the clause
vertex is the central vertex; creating $m$ \emph{satisfiability
  gadgets} $G_\V(c)$ for $c \in \mathcal{C}$.
See Fig.~\ref{fi:ssvr} for an example.

Intuitively, the satisfiability gadgets allow us to encode local constraints on the literals for each clause.
We use this to enforce ``not-all-equal" satisfiability.
By contrast, the consistency gadgets allow us to encode global constraints that span multiple clauses; i.e., relating literals that correspond to the same variable.
This completes our construction of $ \langle G_\V, G_\H \rangle $.

\paragraph{Correctness.}
Lemmas~\ref{le:ssvr-implies-satisfiability} and~\ref{le:satisfiability-implies-ssvr} establish the correctness of our reduction.
Hence, since our construction of $ \langle G_\V, G_\H \rangle $ requires $ \Theta(m) $ time, USSVR recognition is \NP-hard.
Note that every USSVR can be redrawn on a $ O(n) \times O(n) $ grid such that its visibilities are unchanged by preserving the order of the endpoints in the $ x $ and $ y $-projections of its unit squares~\cite{casel17}.
This gives a certificate using polynomially-many bits that can be verified in polynomial time.
Thus, USSVR recognition is \NP-complete.

\begin{lemma}\label{le:ssvr-implies-satisfiability}
If $ \langle G_\V, G_\H \rangle $ has a USSVR, $ \varphi $ is satisfiable.
\end{lemma}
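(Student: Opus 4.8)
The plan is to read a truth assignment off the representation $\Gamma$ and verify it not-all-equal satisfies every clause. First I would exploit the coarse structure of the two graphs. Since $G_\H$ is the disjoint union of the clause path and the $n$ variable paths, the horizontal form of Lemma~\ref{le:NoTwist} shows their $y$-projections are pairwise disjoint intervals; call them the clause band $B_0$ and the variable bands $B_v$. Define variable $v$ to be true exactly when $B_v$ lies above $B_0$. This is well defined because the bands are disjoint, and it is the assignment I will show works.

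Next I would translate ``true'' into geometry. Fix a clause $c$, whose three literal-occurrence vertices $\ell_1,\ell_2,\ell_3$ form a claw in $G_\V$ with center $c$; an edge of this claw forces $\Gamma(\ell_i)$ and $\Gamma(c)$ to be vertically separated, one above the other. Since $Y_\Gamma(\ell_i)\subseteq B_{\mathrm{var}(\ell_i)}$ and $Y_\Gamma(c)\subseteq B_0$ and these bands are disjoint, $\Gamma(\ell_i)$ is above $\Gamma(c)$ if and only if $\mathrm{var}(\ell_i)$ is true. Hence clause $c$ is not-all-equal satisfied precisely when the satisfiability gadget $G_\V(c)$ does \emph{not} have all three leaves on the same side of its center.

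The crux — and the step I expect to be the real obstacle — is therefore a purely geometric claim about a single satisfiability gadget: in any unit-square vertical visibility representation of a claw $K_{1,3}$, the center cannot have all three leaves above it (nor, symmetrically, all three below). I would prove it as follows. The center $c$ is a cut vertex whose removal leaves three singleton components, so by Lemma~\ref{le:Nestedness} some leaf $\ell$ has $X_\Gamma(\ell)\subseteq X_\Gamma(c)$, hence $X_\Gamma(\ell)=X_\Gamma(c)$ by unit width. Suppose all leaves were above $c$. At most one leaf can have its $x$-projection equal to $X_\Gamma(c)$ (two such would be vertically stacked, and the lower one would block the higher one's view of $c$), so there is exactly one, call it $\ell_1$. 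Any other leaf $\ell_j$ must lie below $\ell_1$ in $y$, for otherwise $\ell_1$, which occupies the whole column $X_\Gamma(c)\supseteq X_\Gamma(\ell_j)\cap X_\Gamma(c)$, would block $\ell_j$ from $c$ (and if $\ell_j$ overlapped $\ell_1$ in $y$ the squares would intersect). Thus $\ell_1$ is the topmost leaf; take the leaf $\ell^{*}$ with the highest top edge among the remaining two. Then nothing in the gadget lies vertically between $\Gamma(\ell^{*})$ and $\Gamma(\ell_1)$ (the third leaf ends no higher than $\ell^{*}$, and $c$ lies below all leaves), and since $X_\Gamma(\ell^{*})\cap X_\Gamma(\ell_1)\subseteq X_\Gamma(c)$ lies in this gadget's $x$-strip, Lemma~\ref{le:NoTwist} rules out any square of another satisfiability gadget there as well — so $\Gamma(\ell_1)$ sees $\Gamma(\ell^{*})$ vertically, contradicting that $\ell_1\ell^{*}$ is a non-edge of $G_\V$. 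The delicate points here are the ``at most one full-width leaf'' and ``nothing in between'' sub-claims and the bookkeeping they require; everything else is routine.

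Putting the pieces together gives the lemma: the assignment is well defined (first step), each variable's truth value equals the side on which its literal occurrences sit relative to their clause vertices (second step), and by the crux every clause has a leaf above and a leaf below its center, i.e.\ a true and a false literal. Hence $\varphi$ is not-all-equal satisfiable. The argument is unaffected by a clause with a repeated variable (as in the example $(v_3\vee v_4\vee v_3)$): two occurrences in the same band automatically sit on the same side of the clause vertex, so the crux simply forces the third onto the other side, which is exactly what not-all-equal satisfaction of such a clause requires.
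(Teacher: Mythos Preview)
Your proof is correct and follows the same overall architecture as the paper: read the assignment from the $y$-band structure given by Lemma~\ref{le:NoTwist}, then show that each claw in $G_\V$ cannot have all three leaves on the same side of its centre. The difference lies in how you establish the crux. After singling out the full-width leaf $\ell_2$ with $X_\Gamma(\ell_2)=X_\Gamma(c)$ (same step as yours), the paper simply applies Property~\ref{prop:nextBarsSameSide} to the path $\ell_k,c,\ell_2$ for each $k\in\{1,3\}$: since $X_\Gamma(\ell_k)\cap X_\Gamma(\ell_2)\neq\emptyset$, the two endpoints cannot lie on the same side of $c$. This yields in one line the stronger conclusion that $\ell_2$ is alone on its side, with no need to order the remaining leaves, argue uniqueness of the full-width leaf, or rule out blockers between $\ell^{*}$ and $\ell_1$. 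Your direct argument works, but it is effectively a hand-rolled special case of Property~\ref{prop:nextBarsSameSide}; invoking that property would collapse your third paragraph to two sentences.
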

\begin{proof}
Let $ \Gamma $ be a USSVR for $ \langle G_\V, G_\H \rangle $.
We construct a truth assignment $ \alpha \colon \mathcal{V} \rightarrow \{T, F\} $ as follows. For every variable $ v \in \mathcal{V} $, we define
\begin{align*}
    \alpha(v) &= \begin{cases}
        T & \text{if $ Y_\Gamma(G_\H(v)) \geq Y_\Gamma(G_\H(\mathcal{C})),$} \\
        F & \text{otherwise}.
    \end{cases}
\end{align*}

We claim that $ \alpha $ satisfies $ \varphi $.
To see this, let us consider any clause vertex $c$ for a clause
$(\ell_1 \vee \ell_2 \vee \ell_3)$.
Note that we distinguish duplicate literals by their order in $\varphi$ as in Fig.~\ref{fi:ssvr}.
By construction, $ c $ is a cut vertex whose removal from $G_\V$
creates three components, each containing one literal
vertex $\ell_i$.
Then by Lemma~\ref{le:Nestedness}, for at least one such vertex, say $
\ell_2 $, $ X_\Gamma(\ell_2) \subseteq X_\Gamma(c) $.
But in fact, since $ \Gamma $ is a unit-square representation, we have
$ X_\Gamma(\ell_2) = X_\Gamma(c) $.
Hence, for every $ k \in \{1, 3\} $, $ X_\Gamma(\ell_k) \cap X_\Gamma(\ell_2) \neq \emptyset $.
Applying Property~\ref{prop:nextBarsSameSide}, we see that $ \Gamma(\ell_2) $ and $ \Gamma(\ell_k) $ must not be both above or both below $ \Gamma(c) $.
Moreover, since every consistency gadget in our construction is a component of $ G_\H $, Lemma~\ref{le:NoTwist} implies that for all variables $ v \in \mathcal{V} $, either $ Y_\Gamma(G_\H(v)) \geq Y_\Gamma(G_\H(\mathcal{C})) $
\add{or $ Y_\Gamma(G_\H(\mathcal{C})) \geq Y_\Gamma(G_\H(v)) $.}
Therefore,  either $ Y_\Gamma(\ell_2) \geq Y_\Gamma(c) \geq Y_\Gamma(\ell_k) $ for $k \in \{1,3\}$, implying that $ \alpha $ satisfies exactly one literal in $ c $, or $ Y_\Gamma(\ell_k) \geq Y_\Gamma(c) \geq Y_\Gamma(\ell_2) $, implying that $ \alpha $ satisfies exactly two.
By repeating this argument for all clauses in $ \mathcal{C} $, we see that $ \alpha $ satisfies $ \varphi $.
\end{proof}

\begin{lemma}\label{le:satisfiability-implies-ssvr}
If $ \varphi $ is satisfiable, $ \langle G_\V, G_\H \rangle $ has a USSVR.
\end{lemma}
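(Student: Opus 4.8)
The plan is to construct an explicit USSVR from a satisfying (not-all-equal) assignment $\alpha$ by a two-level layout: use the $y$-coordinates to encode truth values and the clause/variable path structure, and use the $x$-coordinates to thread the $K_{1,3}$ satisfiability gadgets. First I would fix a horizontal strip of unit-height rows for the clause consistency gadget $G_\H(\mathcal C)$ — all $m$ clause squares placed at the same $y$-interval, say $Y_\Gamma(c)=[0,1]$ for every clause $c$, and laid left-to-right in the order the clauses appear in $\varphi$ with horizontal visibilities between consecutive clauses (so $G_\H(\mathcal C)$ is realized). This is the ``equator.'' Then for each variable $v$, I place the squares of its variable consistency gadget $G_\H(v)$ in a common row strictly above the equator if $\alpha(v)=T$ (say $Y=[2,3]$, or more precisely stagger different true variables into distinct rows $[2k,2k+1]$ so their gadgets don't interfere) and strictly below it if $\alpha(v)=F$ (rows $[-2k-1,-2k]$), again chained left-to-right by horizontal visibility so each $G_\H(v)$ path is realized; crucially no horizontal line of sight should connect a variable gadget to the equator or to another variable gadget, which is guaranteed by the vertical separation plus careful $x$-placement.

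Next I would handle the satisfiability gadgets $G_\V(c)$. For clause $c=(\ell_1\vee\ell_2\vee\ell_3)$, exactly one literal disagrees with the majority (in the not-all-equal sense): either exactly one of the three is true (then two literal squares sit below the equator and one above) or exactly two are true (two above, one below). I would choose the ``middle'' literal $\ell_{\mathrm{mid}}$ to be the odd one out — the unique true literal if only one is true, or the unique false literal if two are true — and give $\ell_{\mathrm{mid}}$'s square the same $x$-interval as $c$'s square, so $X_\Gamma(\ell_{\mathrm{mid}})=X_\Gamma(c)$, placing it directly above $c$ if $\ell_{\mathrm{mid}}$ is true and directly below $c$ if it is false, with a vertical line of sight to $c$. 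The other two literals of $c$ get $x$-intervals that overlap $X_\Gamma(c)$ (so they can see $c$ vertically) but are arranged so that exactly one is strictly above and one strictly below $c$ — which is consistent because when $\ell_{\mathrm{mid}}$ is the unique true literal the other two are false (hence below) — wait, that would put both below, not one above and one below; so I need the finer observation that the three literals split as (above-count, below-count) $\in\{(1,2),(2,1)\}$ and I should pick $\ell_{\mathrm{mid}}$ as whichever single literal is on the minority side, then route the other two so one is above and one below $c$. The vertical $K_{1,3}$ is realized iff $\ell_{\mathrm{mid}}$ sees $c$ and the other two see $c$, none of which is blocked if I nest their $x$-intervals appropriately inside a short $x$-window around $c$, shrinking literal squares of different clauses into disjoint $x$-windows so that $G_\V(c)$ and $G_\V(c')$ occupy disjoint vertical columns (hence by Lemma~\ref{le:NoTwist}-style separation they cannot interfere).

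The remaining work is bookkeeping to make all of this simultaneously consistent, which is the real obstacle. The key tension is that each literal square has a fixed $x$-interval forced by its clause's $K_{1,3}$ but also lies in a variable row whose horizontal chain must not be blocked and must not accidentally see the equator or a neighboring chain; and different clauses' literal squares that belong to the same variable row must be ordered along that row compatibly with their (independently chosen) clause $x$-windows. I would resolve this by making the construction hierarchical: reserve a wide $x$-slab $I_c$ per clause $c$, ordered left-to-right as the clauses appear; inside $I_c$ carve a sub-slab for $c$'s square and three narrow sub-slabs for its literals; then when a variable $v$ occurs in clauses $c_{i_1}<\dots<c_{i_r}$, its gadget squares in $v$'s row naturally inherit that left-to-right order, and I add horizontal visibilities along the row by ensuring no taller obstruction lies between consecutive occurrences — achievable because every other square is either in a different row (vertically separated) or, if in the same row, belongs to a clause slab strictly between $i_j$ and $i_{j+1}$ and hence is a different variable's square that I can push into a disjoint $x$-interval or simply allow to coexist since same-row squares with disjoint $x$-intervals and the right ordering still permit the needed line of sight only between the intended pair. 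The main thing to verify carefully is exactly this ``no unwanted visibility'' condition in both directions and the ``all wanted visibilities unblocked'' condition; once the slab hierarchy is fixed, each check is local and routine, but getting a slab assignment that satisfies all of them at once — especially reconciling the vertical $K_{1,3}$ constraints with the horizontal path constraints of the variable rows — is where I would spend essentially all the effort, likely guided by the example in Fig.~\ref{fi:ssvr}.
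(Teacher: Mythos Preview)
Your overall architecture matches the paper's: place the clause path on one $y$-row, give each variable its own $y$-row above or below according to $\alpha$, and use per-clause $x$-slabs to realize each $K_{1,3}$. The paper does exactly this, with explicit coordinates: clause $i$ and its minority literal share $X=[3i+1,3i+2]$, the other two literals sit at $[3i,3i+1]+\epsilon$ and $[3i+2,3i+3]-\epsilon$, and a bijection $\rho$ assigns each consistency gadget its own $y$-row.

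However, your handling of the $K_{1,3}$ has a real gap. You correctly identify the minority literal $\ell_{\mathrm{mid}}$ and give it $X_\Gamma(\ell_{\mathrm{mid}})=X_\Gamma(c)$, but then you write that the other two literals should be ``arranged so that exactly one is strictly above and one strictly below $c$,'' catch yourself, and then still conclude you should ``route the other two so one is above and one below $c$.'' This is impossible: the $y$-row of every literal is already fixed by its variable's truth value, and the two majority literals are by definition on the \emph{same} side of the equator. You cannot move one of them across.

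The point you are missing is that having both majority literals on the same side of $c$ is exactly what makes the gadget work. With $\ell_{\mathrm{mid}}$ alone on one side sharing $X_\Gamma(c)$, and $\ell_1,\ell_3$ both on the other side with $x$-intervals that each overlap $X_\Gamma(c)$ but are disjoint from each other (one nudged left, one nudged right), you get: $c$ sees all three literals; $\Gamma(c)$ blocks any vertical visibility between $\ell_{\mathrm{mid}}$ and $\ell_1,\ell_3$ (since $X_\Gamma(\ell_{\mathrm{mid}})=X_\Gamma(c)$); and $\ell_1,\ell_3$ do not see each other because their $x$-intervals are disjoint. That is the entire content of the paper's second observation. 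Once you have this, the rest of your bookkeeping paragraph is unnecessary worry: with one $y$-row per variable and disjoint $x$-slabs per clause, the horizontal paths and the absence of unwanted visibilities follow immediately.
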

\begin{proof}
Let $ \alpha \colon \mathcal{V} \rightarrow \{T, F\} $ be a truth assignment satisfying $ \varphi $. 
To construct a USSVR $ \Gamma $ for $ \langle G_\V, G_\H \rangle $, we first represent $ G_\V $ and $ G_\H $ as two sets of intervals on the $ x $ and $ y $-axes respectively.
The construction of these intervals is as follows.

For the $i$th clause $ c = ( \ell_1 \vee \ell_2 \vee \ell_3) \in
\mathcal{C} $, since $ \alpha $ satisfies 
exactly one or two of its literals, there must be one, say $ \ell_2 $,
that has a unique truth value.
We represent both $ c $ and $ \ell_2 $ on the $ x $-axis by the interval $ [3i + 1, 3i + 2] $.
Moreover, assuming $ \ell_1 $ and $ \ell_3 $ are in order, we
represent their corresponding literal vertices on the $ x $-axis as
the intervals $ [3i, 3i + 1] + \epsilon $ and $ [3i + 2, 3i + 3] -
\epsilon $ respectively, for some small $ \epsilon > 0 $.

Let $ \rho \colon \mathcal{V} \cup \{\mathcal{C}\} \rightarrow \{0, \ldots, |\mathcal{V}|\} $ be a bijection satisfying $ \rho(v) > \rho(\mathcal{C}) $ if and only if $ \alpha(v) = T $ for each $ v \in \mathcal{V} $.
For each variable consistency gadget $ G_\H(v) $, we represent its vertices on the $ y $-axis by the interval $ [2\rho(v), 2\rho(v) + 1] $.
We also represent the clause consistency gadget similarly, replacing $ \rho(v) $ with $ \rho(\mathcal{C}) $.

Observe that each vertex in $ V $ is represented by two unit intervals, one on the $ x $-axis and one on the $ y $-axis.
Thus, for each $ u \in V $, we can define $ \Gamma(u) $ to be the Cartesian product of its two corresponding intervals.
To see that this gives a valid USSVR $ \Gamma $ for $ \langle G_\V, G_\H \rangle $, we make three observations.
\begin{enumerate}
\item Every gadget in $ G_\V $ (resp., $ G_\H) $ occupies a
  contiguous interval on the $ x $-axis (resp., $ y $-axis)
  that is disjoint from the intervals of other gadgets.
\item Every satisfiability gadget in $ G_\V $ for a clause $ c = (
  \ell_1 \vee \ell_2 \vee \ell_3)$ is drawn such that $ \Gamma(c)
  $ blocks vertical visibility between $ \Gamma(\ell_2) $ and $
  \Gamma(\ell_k) $ for $ k \in \{1, 3\} $ assuming $\ell_2$ has
  the unique unique truth value of literals in $c$.
\item Every consistency gadget in $ G_\H $ is drawn as a horizontal stack of unit squares (in order from left to right) that share a $ y $-projection.
\end{enumerate}

The first observation implies that no two gadgets in $ G_\V $ (resp., $ G_\H $) share an (unwanted) visibility.
The next two observations mean that the implied visibilities for each gadget in $ G_\V $ and $ G_\H $ are realized exactly.
\end{proof}

\subsection{Hardness of RSVR recognition}

\label{sec:rsvr-recognition}
In this section, we prove that determining if a pair of undirected graphs has a \emph{simultaneous visibility representation using rectangles} (RSVR) is \NP-complete.
In contrast to the proof given in Section~\ref{sec:ssvr-recognition}, here, we reduce from the \NP-complete problem of 3SAT~\cite{cook71}.
A new reduction is needed since 
every pair of edge-disjoint caterpillar forests (as produced for the reduction in Section~\ref{sec:ssvr-recognition}) has an RSVR due to Theorem 5 by Bose et al.~\cite{bose96}.

Our modified construction
is not much more complicated than before:
one graph remains a set of disjoint paths while the other is a set of
disjoint trees with 3 leaves connected to a root by paths of length 2.
This slight modification allows us to prove the following theorem.

\begin{theorem}\label{thm:rsvr-is-hard}
    Deciding if a pair of undirected graphs has an RSVR is \NP-complete.
\end{theorem}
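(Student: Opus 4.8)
The plan is to mirror the structure of the USSVR proof in Section~\ref{sec:ssvr-recognition}, but to reduce from plain 3SAT rather than Monotone Not-All-Equal 3SAT, and to replace the $K_{1,3}$ satisfiability gadget with the ``spider'' tree $\spider$ (a root joined to three leaves by paths of length $2$). Membership in \NP\ follows exactly as before: an RSVR of a graph pair on $N$ vertices can be redrawn on an $O(N)\times O(N)$ grid by compressing the $x$- and $y$-coordinates of the rectangle corners while preserving their order, hence preserving all visibilities; this gives a polynomial-size certificate checkable in polynomial time. So the work is in the \NP-hardness reduction, and in particular in showing that the spider gadget behaves the way we want.

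For the construction, given a 3SAT formula $\varphi$ with clauses $\mathcal{C}$ and variables $\mathcal{V}$, I would put a vertex for each clause and for each literal occurrence, together with the extra ``midpoint'' vertices needed to make the length-$2$ paths of the spider. In $G_\H$, the clause vertices again form one path (the clause consistency gadget), and all occurrences of a given variable form a path (the variable consistency gadget) — but now, crucially, the path for variable $v$ must distinguish positive occurrences from negative ones, so I would route it so that positive-literal vertices and negative-literal vertices end up on opposite sides of some reference vertex, and arrange that ``$v$ is true'' corresponds to a consistent vertical ordering, much as $\alpha(v)$ was read off from $Y_\Gamma(G_\H(v))$ versus $Y_\Gamma(G_\H(\mathcal{C}))$ before. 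In $G_\V$, each clause gives a spider: the clause vertex is the root, and the three legs end at the three literal-occurrence vertices of that clause. As in the USSVR proof, the correctness is split into two lemmas: (i) if $\langle G_\V,G_\H\rangle$ has an RSVR then $\varphi$ is satisfiable, and (ii) if $\varphi$ is satisfiable then $\langle G_\V,G_\H\rangle$ has an RSVR. Direction (ii) is the easy, explicit-drawing direction: given a satisfying assignment, place the spider and consistency gadgets in nested axis-parallel intervals so that every prescribed visibility is realized and every gadget occupies a disjoint $x$-interval (for $G_\V$) and $y$-interval (for $G_\H$), using Lemma~\ref{le:NoTwist} as the sanity check that disjoint components can indeed be laid out side by side.

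The heart of the argument — and the step I expect to be the main obstacle — is direction (i), the structural analysis of how a spider can be drawn in a (vertical) visibility representation. The clause vertex $c$ is a cut vertex whose removal leaves three components, one per leg, so Lemma~\ref{le:Nestedness} forces at least one leg, say the one through midpoint $m_2$ to leaf $\ell_2$, to satisfy $X_\Gamma(\{m_2,\ell_2\})\subseteq X_\Gamma(c)$. Because rectangles (unlike unit squares) need not have equal $x$-extent, I cannot conclude $X_\Gamma(\ell_2)=X_\Gamma(c)$ directly; instead I want to apply Lemma~\ref{le:Nestedness} again, this time to the cut vertex $m_2$ inside the nested leg, or to argue along the unique path $c,m_2,\ell_2$ using Property~\ref{prop:nextBarsSameSide}, to pin down that $\Gamma(\ell_2)$ lies strictly on one side of $\Gamma(c)$ vertically. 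The reason length-$2$ legs are used rather than edges (the $K_{1,3}$ that sufficed for unit squares) is precisely to defeat the freedom rectangles have: the midpoint vertex $m_i$ gives an intermediate $x$-interval that is squeezed between $X_\Gamma(\ell_i)$ and $X_\Gamma(c)$, which lets Property~\ref{prop:nextBarsSameSide} force each satisfied/unsatisfied literal leaf above or below $\Gamma(c)$, so that the vertical order of $\Gamma(\ell_i)$ relative to $\Gamma(c)$ encodes the literal's truth value. Combined with Lemma~\ref{le:NoTwist} applied to the variable consistency paths (which are full components of $G_\H$), this yields a globally consistent truth assignment; and the spider's geometry rules out the all-true and all-false configurations within a clause, exactly the constraint 3SAT needs. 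Making this chain of nesting/side arguments fully rigorous — especially handling the various cases of which leg is nested and whether the forced side is strict — is where the real care is required; the rest is bookkeeping analogous to Lemmas~\ref{le:ssvr-implies-satisfiability} and~\ref{le:satisfiability-implies-ssvr}.
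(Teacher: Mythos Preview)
Your outline has the right high-level shape (reduce from 3SAT, use the spider for the satisfiability gadget, invoke Lemma~\ref{le:Nestedness} and Property~\ref{prop:nextBarsSameSide}), but the construction you describe has a real gap that prevents direction (i) from going through.

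First, a small but telling confusion: plain 3SAT does not require that ``all-true and all-false configurations'' be ruled out; it requires only that at least one literal be satisfied. The paper's argument shows exactly this --- it identifies one literal in each clause that must be satisfied --- and never needs a not-all-equal constraint.

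The substantive issue is in how you handle polarities in $G_\H$. You propose a \emph{single} variable consistency path per variable, containing both positive and negative occurrences, with the positive and negative ones ``on opposite sides of some reference vertex'' along the path. But Lemma~\ref{le:NoTwist} acts on whole components: it forces the $y$-projection of that entire path to lie in one interval disjoint from the clause path's interval. Nothing in your setup then separates positive occurrences from negative ones vertically, so ``$\Gamma(\ell_2)$ above $\Gamma(c)$'' does not encode the truth value of $\ell_2$ in any usable way. Relatedly, in your spider the leaves are the literal vertices and the midpoints are auxiliary; applying Property~\ref{prop:nextBarsSameSide} along $c,m_2,\ell_2$ only tells you $m_2$ is vertically between $c$ and $\ell_2$, which constrains nothing about $\ell_2$'s truth value.

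The paper's fix is to introduce, for every variable $v$, \emph{two} consistency paths $G_\H(v)$ and $G_\H(\overline{v})$ (separate components of $G_\H$), and to make the \emph{midpoints} of the spider the literal occurrences while the \emph{leaves} are the corresponding negated literals. For the nested leg one gets $X_\Gamma(\overline{\ell_2})\cap X_\Gamma(c)\neq\emptyset$, and Property~\ref{prop:nextBarsSameSide} along the path $\overline{\ell_2},\ell_2,c$ forces $\Gamma(\ell_2)$ to lie vertically between $\Gamma(c)$ and $\Gamma(\overline{\ell_2})$. Since $G_\H(v)$, $G_\H(\overline{v})$, and $G_\H(\mathcal{C})$ are distinct components, Lemma~\ref{le:NoTwist} now genuinely orders their $y$-intervals, and this ordering says precisely that $\ell_2$ is satisfied. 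That is the missing idea: separate positive and negative consistency paths, and negated-literal leaves, are what make the rectangle version go through.
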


\begin{figure*}
\centering
\includegraphics[scale=0.65]{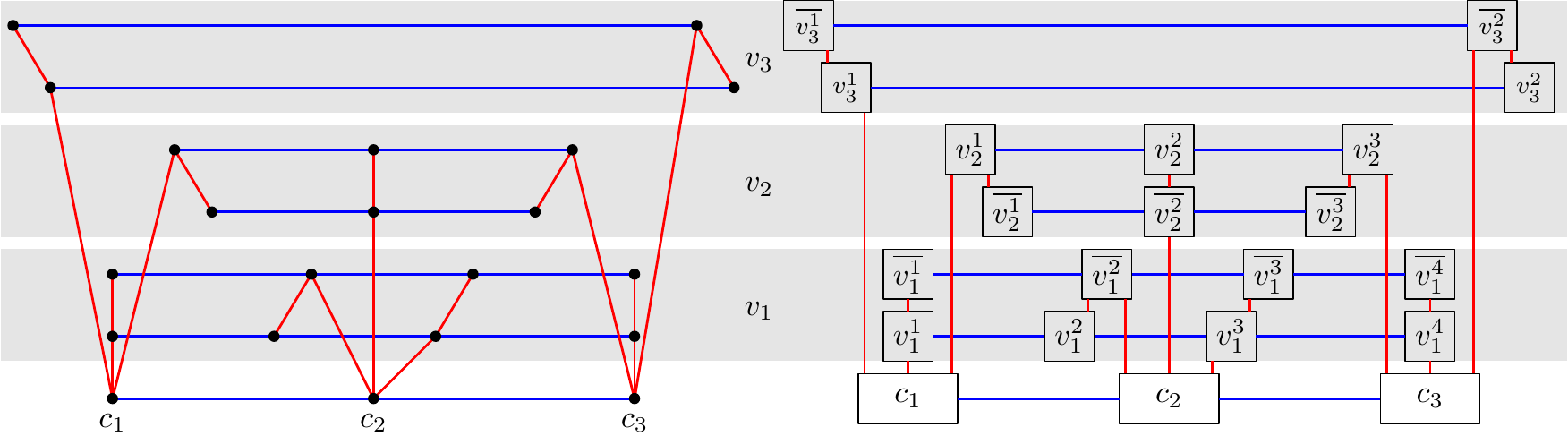}
\caption{
Left: $G_\V$ (red) and $G_\H$ (blue) for 3SAT instance
$\varphi = (v_3 \vee v_1 \vee v_2) (\overline{v_1} \vee \overline{v_2} \vee v_1) (v_2 \vee v_1 \vee \overline{v_3}) $.
Right: An RSVR of $\la G_\V , G_\H \ra$ encoding
the truth assignment $ v_1, v_3 = T $ and $ v_2 = F $.}
\label{fi:rsvr}
\end{figure*}

\textbf{Construction.}
Let $ \varphi $ be an instance of 3SAT with a set $ \mathcal{C} $ of $m$ clauses and a set $ \mathcal{V} $ of $n \leq 3m$ variables.

We adapt the gadgets used in Section~\ref{sec:ssvr-recognition} to this setting as follows.
Each satisfiability gadget $G_\V(c)$ for $ c \in \mathcal{C} $ is now a 1-subdivision of $K_{1,3}$ (i.e. $\spider$) where the central vertex is the clause $c$, the subdivision vertices are the occurrences of literals in the clause, and each leaf is an occurrence of the negation of its parent.
In addition to the variable consistency gadget, we also construct a \emph{negated variable consistency gadget} $ G_\H(\overline{v}) $ for each variable $v \in \mathcal{V}$ that is the path of negated occurrences of literals in the order of their appearance in $\varphi$.
This completes our construction of $ \la G_\V, G_\H \ra $; see Fig.~\ref{fi:rsvr} for an example. 

\textbf{Correctness.}
Lemmas~\ref{le:rsvr-implies-satisfiability} and \ref{le:satisfiability-implies-rsvr} establish the correctness of our reduction.
Thus, by a similar argument to the one found in Section~\ref{sec:ssvr-recognition}, RSVR recognition is \NP-complete.

\begin{lemma}\label{le:rsvr-implies-satisfiability}
	If $ \langle G_\V, G_\H \rangle $ has an RSVR, $ \varphi $ is satisfiable.
\end{lemma}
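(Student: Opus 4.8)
The plan is to mirror the argument of Lemma~\ref{le:ssvr-implies-satisfiability}, adapting it to the richer gadget structure of the RSVR construction. Given an RSVR $\Gamma$ of $\langle G_\V, G_\H\rangle$, I would first use Lemma~\ref{le:NoTwist}: since each (possibly negated) variable consistency gadget $G_\H(v)$, $G_\H(\overline v)$ and the clause consistency gadget $G_\H(\mathcal{C})$ are pairwise-disjoint components of $G_\H$, their $y$-projections are totally ordered along the $y$-axis. I would then define the truth assignment by comparing the vertical position of $G_\H(v)$ against $G_\H(\mathcal{C})$: set $\alpha(v)=T$ if $Y_\Gamma(G_\H(v)) \geq Y_\Gamma(G_\H(\mathcal{C}))$ and $F$ otherwise. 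The crucial new point, which must be checked, is that the negated-literal gadget ends up on the opposite side of $G_\H(\mathcal{C})$ from the positive-literal gadget — i.e., that the representation is forced to be "consistent" with negation; this is exactly where the leaf vertices of the $\spider$ gadget and Lemma~\ref{le:Nestedness} come into play.

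Next I would analyze a single satisfiability gadget $G_\V(c)$, which is now a $1$-subdivision of $K_{1,3}$: a central clause vertex $c$, three subdivision vertices $s_1,s_2,s_3$ (the literal occurrences), and three leaves $t_1,t_2,t_3$ where $t_i$ represents the negation of $s_i$. The vertex $c$ is a cut vertex of $G_\V(c)$ whose removal yields three components $C_i = \{s_i, t_i\}$. By Lemma~\ref{le:Nestedness}, for at least one index, say $i=2$, we have $X_\Gamma(C_2) \subseteq X_\Gamma(c)$. Unlike the unit-square case I cannot conclude equality of projections, but I can still argue about vertical separation: within the path $t_2, s_2, c$ (and symmetrically along the other arms), I want to pin down whether $s_2$ and $t_2$ lie above or below $c$. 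The key sub-claim is that, because $X_\Gamma(\{s_2,t_2\})\subseteq X_\Gamma(c)$ and because $s_j$ (for $j\neq 2$) must see $c$ vertically while $C_2$'s projection sits inside $c$'s, Property~\ref{prop:nextBarsSameSide} applied to the unique path $s_j, c, s_2$ forces $\Gamma(s_j)$ and $\Gamma(s_2)$ to lie on opposite sides of $\Gamma(c)$ — so $c$ separates the "unique-value" literal from the other two, exactly as in the USSVR proof. Then I need that $s_i$ and its negation leaf $t_i$ lie on opposite sides of $c$ (equivalently, that the path $t_i, s_i, c$ does not have $s_i$ between $t_i$ and $c$ in the vertical order while both are on the same side) — this is what ties the sign of a literal occurrence to the $G_\H$-gadget it belongs to.

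Finally I would stitch the local picture to the global one. For each clause $c$, vertical order gives either $Y_\Gamma(s_2)\geq Y_\Gamma(c)\geq Y_\Gamma(s_j)$ or the reverse, for $j\in\{1,3\}$; and the corresponding leaves $t_i$ are on the opposite side of $c$ from $s_i$. Combined with the total order on the $G_\H$-components from Lemma~\ref{le:NoTwist} and the fact that $s_i$ and $t_i$ belong respectively to $G_\H(v_i)$ (or $G_\H(\overline{v_i})$) and to the opposite gadget, the position of each $s_i$ relative to $G_\H(\mathcal{C})$ is determined by $\alpha$ applied to the literal (respecting negation). Hence $\alpha$ makes exactly one or exactly two literals of every clause true, so $\varphi$ is satisfiable.

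I expect the main obstacle to be the middle step: rigorously establishing that negation is respected, i.e. that an occurrence $s_i$ and its negation-leaf $t_i$ are forced onto opposite sides of the clause vertex $c$ in the vertical order. In the unit-square proof the equality $X_\Gamma(\ell_2)=X_\Gamma(c)$ did a lot of work; with rectangles one only gets containment, so I would have to argue more carefully using Property~\ref{prop:nextBarsSameSide} along the length-two paths $t_i, s_i, c$ and the connectedness/disjointness of the $G_\H$ paths, possibly splitting into the case where $X_\Gamma(C_i)\subseteq X_\Gamma(c)$ and the (at most two) cases where it is not. Handling the leaves — which exist precisely to force this negation-consistency and are the reason the $\spider$ gadget replaces $K_{1,3}$ — is the heart of why a new reduction from $3$SAT (rather than Monotone NAE-$3$SAT) is both possible and necessary here.
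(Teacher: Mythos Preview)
Your plan has a genuine gap, and it stems from carrying over too much of the USSVR argument to a setting where it no longer holds.

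The ``crucial new point'' you flag --- that $G_\H(v)$ and $G_\H(\overline v)$ must end up on opposite sides of $G_\H(\mathcal C)$ --- is simply false. Look at the companion construction in Lemma~\ref{le:satisfiability-implies-rsvr}: there \emph{every} variable gadget and \emph{every} negated-variable gadget is placed strictly above $G_\H(\mathcal C)$ (all at $y$-intervals $[4j,4j+1]$ or $[4j+2,4j+3]$ with $j\ge 1$, while the clause gadget sits at $[0,1]$). Consequently all three literal vertices $s_1,s_2,s_3$ of a clause lie above $c$, so your claim that $s_2$ and $s_j$ are forced to opposite sides of $c$ also fails. Your definition of $\alpha$ (based only on $Y_\Gamma(G_\H(v))$ versus $Y_\Gamma(G_\H(\mathcal C))$) would then set every variable to $T$ in that RSVR, regardless of the satisfying assignment encoded. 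And the NAE-style conclusion you are aiming for --- exactly one or two literals true per clause --- is not what is needed (or even true) here: this reduction is from ordinary 3SAT, so one satisfied literal per clause suffices.

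The paper's actual argument is both different and simpler. It defines $\alpha(v)=T$ exactly when $G_\H(v)$ lies \emph{between} $G_\H(\mathcal C)$ and $G_\H(\overline v)$ in the total $y$-order given by Lemma~\ref{le:NoTwist} (``positively-arranged''). Then it uses Lemma~\ref{le:Nestedness} only to find one branch $\{\ell_2,\overline{\ell_2}\}$ with $X_\Gamma(\{\ell_2,\overline{\ell_2}\})\subseteq X_\Gamma(c)$; from $X_\Gamma(\overline{\ell_2})\cap X_\Gamma(c)\neq\emptyset$ and Property~\ref{prop:nextBarsSameSide} on the path $\overline{\ell_2},\ell_2,c$ it gets that $\ell_2$ lies vertically between $c$ and $\overline{\ell_2}$. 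That alone makes the variable of $\ell_2$ positively-arranged iff $\ell_2$ is a positive literal, so $\alpha$ satisfies $\ell_2$. No analysis of $\ell_1,\ell_3$ or of ``opposite sides of $c$'' is required. The leaves in the $\spider$ gadget are there to force this sandwiching of $\ell_2$ between $c$ and $\overline{\ell_2}$, not to force opposite-sidedness across $G_\H(\mathcal C)$.
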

\begin{proof}
    Let $ \Gamma $ be an RSVR for $ \langle G_\V, G_\H \rangle $.
    If, for some variable $ v \in \mathcal{V} $, we have $ Y_\Gamma(G_\H(\overline{v})) \geq Y_\Gamma(G_\H(v)) \geq Y_\Gamma(G_\H(\mathcal{C})) $
    \add{or $ Y_\Gamma(G_\H(\mathcal{C})) \geq Y_\Gamma(G_\H(v)) \geq Y_\Gamma(G_\H(\overline{v})) $},
    we say that $ v $ is \emph{positively-arranged} in $ \Gamma $.
    We construct a truth assignment $ \alpha \colon \mathcal{V} \rightarrow \{T, F\} $ as follows. For each variable $ v \in \mathcal{V} $, we 
    define
    \begin{align*}
        \alpha(v) &= \begin{cases}
            T & \text{if $ v $ is positively-arranged in $ \Gamma, $} \\
            F & \text{otherwise}.
        \end{cases}
    \end{align*}
    
    We claim that $ \alpha $ satisfies $ \varphi $.
    To see this, let us consider any clause vertex $c$ for a clause $(\ell_1 \vee \ell_2 \vee \ell_3)$.
    By construction, $ c $ is a cut vertex whose removal from $ G_\V(c) $ creates three components, each containing one literal vertex and its negated counterpart.
    Then by Lemma~\ref{le:Nestedness}, for at least one literal, say $ \ell_2 $, we have $ X_\Gamma(\{\ell_2,\overline{\ell_2}\}) \subseteq X_\Gamma(c) $.
    Hence, $ X_\Gamma(\overline{\ell_2}) \cap X_\Gamma(c) \neq \emptyset $.
    
    Applying Property~\ref{prop:nextBarsSameSide}, we see that $ \Gamma(\overline{\ell_2}) $ and $ \Gamma(c) $ must not be both above or both below $ \Gamma(\ell_2) $.
    Moreover, since the consistency gadgets in our construction are components of $ G_\H $, Lemma~\ref{le:NoTwist} implies that their $ y $-projections must form disjoint intervals.
    Therefore, either $ Y_\Gamma(\overline{\ell_2}) \geq Y_\Gamma(\ell_2) \geq Y_\Gamma(c) $ or \add{$ Y_\Gamma(c) \geq Y_\Gamma(\ell_2) \geq  Y_\Gamma(\overline{\ell_2}) $}.
    Thus, if $ \ell_2 $ is a positive literal then its variable is positively-arranged in $ \Gamma $; otherwise, $ \ell_2 $ is a negative literal implying that its variable is not positively-arranged in $ \Gamma $.
    In either case, $ \alpha $ satisfies $ c $.
    By repeating this argument for all clauses in $ \mathcal{C} $, we see that $ \alpha $ satisfies $ \varphi $.
\end{proof}

\begin{lemma}\label{le:satisfiability-implies-rsvr}
	If $ \varphi $ is satisfiable, $ \langle G_\V, G_\H \rangle $ has an RSVR.
\end{lemma}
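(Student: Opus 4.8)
The plan is to mirror the construction in the proof of Lemma~\ref{le:satisfiability-implies-ssvr}. From a satisfying assignment $\alpha$ of $\varphi$ we build, for every vertex $u$, an $x$-interval $X_\Gamma(u)$ and a $y$-interval $Y_\Gamma(u)$, and take $\Gamma(u)=X_\Gamma(u)\times Y_\Gamma(u)$; the $x$-intervals must realize the spiders $\gv(c)$ by vertical visibility and the $y$-intervals must realize the $\gh$-paths --- the clause gadget $\gh(\mathcal C)$ and the gadgets $\gh(v),\gh(\overline v)$ --- by horizontal visibility. We may assume that no clause of $\varphi$ repeats a literal (this restricted 3SAT is still \NP-complete). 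For each clause $c$ fix one literal $\ell_c$ of $c$ that $\alpha$ satisfies; its subdivision vertex in $\gv(c)$ is the one whose leg we will ``nest''.

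The $y$-axis governs $\gh$. Give each consistency gadget a private horizontal band (a $y$-interval disjoint from all the others) and draw each gadget's rectangles sharing that band's $y$-interval (their $x$-extents are fixed below). Order the bands so that $\gh(\mathcal C)$ is the lowest, and so that above it, for every variable $v$, the band of whichever of $v,\overline v$ is \emph{true} under $\alpha$ lies below the band of its negation; the $n$ pairs of bands may be interleaved arbitrarily. This is exactly the arrangement that the forward direction (Lemma~\ref{le:rsvr-implies-satisfiability}) reads off a representation: since $\ell_c$ is true, the subdivision vertex $\ell_c$ then lies in $y$ strictly between $c$ and its leaf $\overline{\ell_c}$. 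I will verify this ``sandwiching'' directly from the band order, checking the two polarities of $\ell_c$, using only that $c\in\gh(\mathcal C)$ is lowest, that $\ell_c$ is the true literal of its pair, and that $\overline{\ell_c}$'s band sits above $\ell_c$'s.

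The $x$-axis governs $\gv$. Give each spider $\gv(c)$ a private $x$-block $B_c$, the blocks pairwise disjoint and ordered by the position of $c$ in $\varphi$. Inside $B_c$ let $X_\Gamma(c)$ occupy the central part of $B_c$; put the nested subdivision vertex $\ell_c$, and its leaf $\overline{\ell_c}$ nested strictly inside $\ell_c$, entirely within $X_\Gamma(c)$; and place the other two subdivision vertices so that each overlaps exactly one end of $X_\Gamma(c)$ and protrudes past it, with its leaf sitting in the protruding part so as to be $x$-disjoint from $X_\Gamma(c)$ --- the three legs being pairwise $x$-disjoint, and the two protruding legs assigned to the two ends in their $\varphi$-order. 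The verification is then local. Over the thin $x$-strip where $c$ overlaps a protruding subdivision vertex only those two shapes are present, so $c$ sees that subdivision vertex irrespective of their bands; likewise that subdivision vertex sees its own leaf, while $c$ misses the leaf since their $x$-intervals are disjoint; for the nested leg, $c$ sees $\ell_c$ over $X_\Gamma(\ell_c)\setminus X_\Gamma(\overline{\ell_c})$, $\ell_c$ sees $\overline{\ell_c}$, and $c$ does \emph{not} see $\overline{\ell_c}$ because $\ell_c$ is $x$-wider than $\overline{\ell_c}$ and, by the sandwiching, lies between $c$ and $\overline{\ell_c}$ in $y$. Disjointness of blocks kills vertical visibilities between distinct spiders, disjointness of bands kills horizontal visibilities between distinct $\gh$-gadgets, and within its band each $\gh$-gadget is a monotone, $\varphi$-ordered horizontal stack, so exactly its path edges are realized.

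The one genuinely delicate point --- and the step I expect to cost the most effort --- is the coupling between the two constructions. Forbidding the non-edge $c\,\overline{\ell_c}$ forces the $y$-sandwiching, hence forces $\ell_c$ to be true and fixes the band order; at the same time the $\gh$-gadgets being \emph{paths} forces the three legs inside each block to be ordered so that every $\gh$-gadget meets the block in $\varphi$-order (otherwise some consecutive pair on a path is separated in $x$ by a third vertex of that path which blocks them at their common band). One must check that, for every clause, a single choice of $\ell_c$ together with a single left-to-right placement of the three legs of $\gv(c)$ meets both requirements simultaneously; a short case analysis on which text position of $c$ supplies $\ell_c$ and on whether the (at most one) variable occurring twice in $c$ has its two occurrences split across two of the legs --- using that $c$ has no repeated literal --- settles this. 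Once the two interval families are pinned down, confirming that precisely the six edges of every spider and precisely the edges of every $\gh$-path appear is a routine application of Properties~\ref{prop:ThinOverlap} and~\ref{prop:nextBarsSameSide} together with the disjointness of blocks and bands.
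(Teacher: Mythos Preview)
Your proposal is correct and follows essentially the same construction as the paper: pick one satisfied literal $\ell_c$ per clause and nest its leg inside $X_\Gamma(c)$ while the other two legs protrude past the ends of $X_\Gamma(c)$; give each consistency gadget its own $y$-band with $G_\H(\mathcal C)$ lowest and, for each variable, the satisfied literal's band below its negation's band; take each $\Gamma(u)$ as the product of its two intervals. The paper does exactly this, only with explicit numerical coordinates (the $[7i+\cdot\,,\,7i+\cdot]$ and $[4j+\cdot\,,\,4j+\cdot]$ intervals) rather than your more abstract ``blocks and bands'' description, and it simply asserts as its third observation that every $G_\H$-gadget ends up as a left-to-right stack.

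Your final ``coupling'' paragraph is extra care aimed at precisely that assertion, but it is over-engineered. The only way a single consistency path $G_\H(w)$ can contribute \emph{two} vertices to one clause block is if that clause contains both $w$ and $\overline w$; such clauses are trivially satisfied and may be discarded before the reduction (your ``no repeated literal'' assumption is weaker and does not by itself exclude them, which is why you are forced into the case analysis). With that standard preprocessing every consistency path meets each block in at most one vertex, the block order already gives $\varphi$-order, and your case analysis on the text position of $\ell_c$ and on doubly-occurring variables becomes unnecessary. Apart from this detour, your argument and the paper's coincide.
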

\begin{proof}
    Let $ \alpha \colon \mathcal{V} \rightarrow \{T, F\} $ be a truth assignment satisfying $ \varphi $. 
    To construct an RSVR $ \Gamma $ for $ \langle G_\V, G_\H \rangle $, we first represent $ G_\V $ and $ G_\H $ as two sets of intervals on the $ x $ and $ y $-axes respectively.
    The construction of these intervals is as follows.
    
    For the $i$th clause $c = (\ell_1 \vee \ell_2 \vee \ell_3) \in \mathcal{C}$, we represent $c$ on the $ x $-axis by the interval $ [7i + 2, 7i + 5] $.
    Next, for one of the satisfied literals in $c$, say $ \ell_2 $, we represent both $ \ell_2 $ and $ \overline{\ell_2} $ on the $ x $-axis by the interval $ [7i + 3, 7i + 4] $.
    Finally, assuming that $ \ell_1 $ and $ \ell_3 $ are in order, we represent $ \ell_1 $ and $ \overline{\ell_1} $ on the $ x $-axis by the intervals $ [7i + 1, 7i + 2] + \epsilon $ and $ [7i, 7i + 1] + 2\epsilon $ respectively, for some positive but small $ \epsilon $. 
    Similarly, we represent $ \ell_3 $ and $ \overline{\ell_3} $ by the intervals $ [7i + 5, 7i + 6] - \epsilon $ and $ [7i + 6, 7i + 7] - 2\epsilon $ respectively.
    
    For the $j$th variable $ v \in \mathcal{V} $, if $ \alpha(v) = T $, we represent the vertices in $ G_\H(v) $ and $ G_\H(\overline{v}) $ on the $ y $-axis by the intervals $ [4j, 4j + 1] $ and $ [4j + 2, 4j + 3] $.
    Otherwise, if $ \alpha(v) = F $, we simply swap the intervals and proceed as before.
    Finally, we represent every vertex in $ G_\H(\mathcal{C}) $ on the $ y $-axis by the interval $ [0, 1] $.
    
    Observe that each vertex in $ V $ is represented by two (nonempty) intervals, one on the $ x $-axis and one on the $ y $-axis.
    Thus, for every $ u \in V $, we can define $ \Gamma(u) $ to be the Cartesian product of its two corresponding intervals.
    To see this gives a valid RSVR $ \Gamma $ for $ \langle G_\V, G_\H \rangle $, we make three observations.
    \begin{enumerate}
    	\item Every gadget in $ G_\V $ (resp., $ G_\H $) occupies a contiguous interval on the $ x $-axis (resp., $ y $-axis) that is disjoint from the intervals of other gadgets.
        \item Every satisfiability gadget in $ G_\V $ for a clause $ c = (\ell_1 \vee \ell_2 \vee \ell_3) $ is drawn such that $ \Gamma(\ell_2) $ blocks vertical visibility between $ \Gamma(c) $ and $ \Gamma(\overline{\ell_2}) $. Moreover, $ X_\Gamma(c) $ intersects $ X_\Gamma(\ell_k) $ but not $ X_\Gamma(\overline{\ell_k}) $ for $ k \in \{1, 3\} $.
        \item Every consistency gadget in $ G_\H $ is drawn as a horizontal stack of rectangles (in order from left to right) that share a $ y $-projection.
    \end{enumerate}
    
    The first observation implies that no two gadgets in $ G_\V $ (resp., $ G_\H $) share an (unwanted) visibility. The next two observations mean that the implied visibilities for each gadget in $ G_\V $ and $ G_\H $ are realized exactly. Therefore, $ \Gamma $ is indeed a valid RSVR for $ \langle G_\V, G_\H \rangle $.
\end{proof}

\subsection{Generalizations}
\label{sec:hardness-generalization}

Notice that in the reduction given in Section~\ref{sec:ssvr-recognition}, we make only the assumption that the $ x $-projection of every allowable shape has the same size.
Thus, we can adapt this reduction to any family
of shapes that share a fixed positive width.
Moreover, the reduction in Section~\ref{sec:rsvr-recognition} can be adapted to any family
of shapes for which at least two have different widths; e.g., the family of L-shapes.
These observations allow us to state the following.

\begin{corollary}
Deciding if a pair of undirected graphs has an SVR using shapes from a family
of connected shapes with positive width and height is \NP-hard.
\end{corollary}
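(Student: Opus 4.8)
The plan is to prove \NP-hardness by a case distinction on the family $\mathcal{F}$ of allowed shapes and, in each case, to recycle one of the two reductions already established, changing only which shapes are placed where. Let $W(\mathcal{F})$ be the set of widths of the shapes in $\mathcal{F}$; since every shape in $\mathcal{F}$ has positive width, $W(\mathcal{F})$ is a nonempty set of positive reals, so either $|W(\mathcal{F})|=1$ or $|W(\mathcal{F})|\ge 2$.

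First I would treat the case $|W(\mathcal{F})|=1$, i.e.\ all shapes of $\mathcal{F}$ share a common width $w>0$ (this covers, e.g., unit squares and translates of any fixed connected shape). I would reuse the reduction of Section~\ref{sec:ssvr-recognition} from Monotone Not-All-Equal 3SAT, which builds $\langle G_\V, G_\H \rangle$ (disjoint paths and disjoint claws) in $\Theta(m)$ time. For ``$\varphi$ satisfiable $\Rightarrow$ SVR'' I would run the layout in the proof of Lemma~\ref{le:satisfiability-implies-ssvr}, assigning to \emph{every} vertex a translate of one fixed shape $S\in\mathcal{F}$ and rescaling the $x$-coordinates by $w$ and the $y$-coordinates by the height of $S$; the three observations in that proof refer only to the $x$- and $y$-projections of the shapes, so they still hold and the prescribed visibilities are realized exactly. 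For the converse I would note that the proof of Lemma~\ref{le:ssvr-implies-satisfiability} uses only that the $x$-projection of a connected positive-width shape is a nondegenerate interval, Properties~\ref{prop:ThinOverlap} and~\ref{prop:nextBarsSameSide} and Lemmas~\ref{le:Nestedness} and~\ref{le:NoTwist} (valid for arbitrary connected shapes), and the one width-dependent step ``$X_\Gamma(\ell_2)\subseteq X_\Gamma(c)\Rightarrow X_\Gamma(\ell_2)=X_\Gamma(c)$'', which remains valid when all shapes share the width $w$. Hence $\langle G_\V, G_\H\rangle$ has an SVR with shapes from $\mathcal{F}$ iff $\varphi$ is satisfiable.

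Next I would treat $|W(\mathcal{F})|\ge 2$ (this covers rectangles and L-shapes). Fix $S_1,S_2\in\mathcal{F}$ with widths $w_1<w_2$, and reuse the reduction of Section~\ref{sec:rsvr-recognition} from 3SAT, which builds $\langle G_\V, G_\H\rangle$ (disjoint paths and disjoint copies of $\spider$) in $\Theta(m)$ time. For ``$\varphi$ satisfiable $\Rightarrow$ SVR'' I would re-run the layout in the proof of Lemma~\ref{le:satisfiability-implies-rsvr}, assigning a translate of $S_2$ to every clause vertex and a translate of $S_1$ to every literal vertex and rescaling the $x$-coordinates so that, for each clause $c=(\ell_1\vee\ell_2\vee\ell_3)$, (i) the common $x$-interval of $\ell_2$ and $\overline{\ell_2}$ (width $w_1$) is nested strictly inside $X_\Gamma(c)$ (width $w_2$), and (ii) $X_\Gamma(c)$ overlaps $X_\Gamma(\ell_k)$ near one of its ends while $X_\Gamma(\overline{\ell_k})$ stays outside $X_\Gamma(c)$, for $k\in\{1,3\}$; a margin of any positive size strictly less than $w_2-w_1$ on each side of the nested interval leaves the room needed to slide the flanking width-$w_1$ shapes, so this is realizable precisely because $w_2>w_1$. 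The $y$-coordinates and the (two) heights are chosen as in that proof, and the three observations there transcribe unchanged. The converse is immediate, since the proof of Lemma~\ref{le:rsvr-implies-satisfiability} is width-agnostic: it invokes only Lemmas~\ref{le:Nestedness} and~\ref{le:NoTwist} and Property~\ref{prop:nextBarsSameSide}.

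In either case the reduction runs in polynomial time and the source problem is \NP-complete, so SVR-recognition for $\mathcal{F}$ is \NP-hard. I expect the only delicate point to be the forward (construction) direction of the second case: one must check that two shapes differing only in width already give enough freedom to realize \emph{simultaneously} the strict nesting of $\{\ell_2,\overline{\ell_2}\}$ inside $c$ and the one-sided overlap of $c$ with $\ell_k$ but not $\overline{\ell_k}$. Everything else is a routine re-reading of the two reductions, since their ``SVR $\Rightarrow$ satisfiability'' halves use only facts that hold for connected positive-width, positive-height shapes (together with the common-width hypothesis in the first case), and the ``satisfiability $\Rightarrow$ SVR'' half of the first case can use congruent translates of a single member of $\mathcal{F}$.
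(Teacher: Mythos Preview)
Your proposal is correct and follows exactly the paper's approach: split on whether all shapes in $\mathcal{F}$ share a common width (reuse the Monotone-NAE-3SAT reduction of Section~\ref{sec:ssvr-recognition}) or at least two widths occur (reuse the 3SAT reduction of Section~\ref{sec:rsvr-recognition}). You in fact supply more detail than the paper, which states the corollary after only the brief observation that the first reduction uses nothing about unit squares beyond their common $x$-projection size and that the second reduction adapts whenever two distinct widths are available; your identification of the nesting-plus-one-sided-overlap step in the forward direction of the second case as the only point requiring care is also apt.
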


For families
of orthogonal polygonal paths with constant complexity
(e.g., L-shapes), SVR recognition is also in \NP; this follows by a similar argument to what we gave for USSVR recognition.

\begin{corollary}
Deciding if a pair of undirected graphs has an SVR using shapes from a family
of orthogonal polygonal paths with constant complexity and positive width and height is \NP-complete.
\end{corollary}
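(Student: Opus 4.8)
The plan is to note that the \NP-hardness direction is already in hand: the preceding corollary shows SVR recognition is \NP-hard for any family of connected shapes with positive width and height, and a family of orthogonal polygonal paths with constant complexity and positive width and height is such a family. So the only thing to argue is membership in \NP, which I would do by a grid-compaction argument paralleling the one sketched for USSVR recognition after Theorem~\ref{thm:ssvr-recognition-is-hard}.

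Concretely, fix a shape family $\mathcal{F}$ in which every allowable shape is an orthogonal polygonal path with at most $k$ segments, $k$ a constant, each of positive width and height. Given $\langle G_\V, G_\H\rangle$ on $n$ vertices and an SVR $\Gamma$, each shape $\Gamma(v)$ is determined by its $O(k)=O(1)$ bend points, hence by $O(1)$ distinct $x$-coordinates and $O(1)$ distinct $y$-coordinates; over all of $V$ this is $O(n)$ coordinates on each axis. I would then replace every $x$-coordinate (resp. $y$-coordinate) by its rank in the sorted list of all $x$-coordinates (resp. $y$-coordinates), obtaining a placement $\Gamma'$ on an $O(n)\times O(n)$ integer grid. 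The point is that because each shape is a union of axis-aligned segments, whether $\Gamma(u)$ sees $\Gamma(v)$ vertically (resp. horizontally) depends only on the relative order of the endpoints of the relevant $x$- and $y$-projections — of $u$, of $v$, and of every potential blocker — and none of these order relations changes under a monotone per-axis rescaling. Thus $\Gamma'$ realizes exactly the same horizontal and vertical visibilities and is an SVR of polynomial bit-size, serving as the certificate. Verification is then easy: for each pair $u,v$ one checks in polynomial time whether the projections overlap and whether some third shape blocks every axis-parallel line-of-sight between them, and compares the realized edge sets with $E_\V$ and $E_\H$; since every shape is a constant-complexity orthogonal path, each such test is a computation on $O(n)$ intervals. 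Membership in \NP together with the hardness above yields \NP-completeness.

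The step I expect to be the main obstacle is justifying that rank-rescaling preserves visibilities in the presence of non-monotone shapes. Unlike a rectangle or a unit square, an orthogonal polygonal path need not be $x$- or $y$-monotone, so its vertical (or horizontal) shadow can have gaps, and ``a third shape blocks every line-of-sight'' is subtler than in the rectangle case. The fix is to observe that a line-of-sight is itself an axis-parallel segment at one fixed coordinate: for a fixed $x$, the set of $y$-values at which a vertical line meets $\Gamma(u)$, the set where it meets $\Gamma(v)$, and the set where it meets a given blocker are each unions of $O(1)$ intervals whose endpoints lie among the stored coordinates, and these finitely many interval endpoints, taken over all relevant $x$ (of which only $O(n)$ combinatorially distinct choices matter), are permuted consistently by a monotone rescaling. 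Hence an unblocked line-of-sight stays unblocked and a blocked one stays blocked, so the full set of visibilities is preserved and the compacted drawing is a legitimate polynomial-size certificate.
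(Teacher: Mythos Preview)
Your proposal is correct and follows essentially the same approach as the paper: \NP-hardness from the preceding corollary, and \NP\ membership via a grid-compaction/rank argument paralleling the USSVR case. The paper's own ``proof'' here is a single sentence referring back to the USSVR argument, so you have in fact supplied more detail than the paper does---in particular, you flag and handle the subtlety that a constant-complexity orthogonal path need not be $x$- or $y$-monotone, which the paper does not discuss.
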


\section{Pairs of undirected paths}
\label{sec:paths}

The hardness results of Section~\ref{sec:np-complete} use graphs that are
not significantly more structurally complicated than paths.
This motivates the question of whether pairs of (undirected) paths always admit SVRs and if not, whether there exists a polynomial time algorithm to decide when they do.

This question has an easy answer when the underlying shapes are rectangles:
Two paths $\la \pv=(V,E_\V),\ph=(V,E_\H) \ra$ defined on the same vertex set have an RSVR (in fact, a USSVR) if and only if $E_\V\cap E_\H\neq\emptyset$.
If the paths share an edge then both $x$ and $y$-projections of two rectangles must overlap implying that the shapes themselves overlap.
If the paths do not share an edge then the following algorithm creates a USSVR:

\begin{quote}
Algorithm \textsf{A}:
For all $v\in V$, place $\G(v)$ in the plane with its bottom-left corner at $(i,j)$ where $i$ (resp., $j$) is $v$'s place along $\pv$ (resp., $\ph$) from a fixed reference endpoint of the path; and set the side lengths of the rectangles to be $1+\eps$ for a small $\eps>0$.
\end{quote}

Brass et al.~\cite{brass07} presented this algorithm to simultaneously embed, without self-intersection, two paths using points as vertices and line segments as edges.
In even earlier work, Bose et al.~\cite{bose96} use a similar approach to obtain rectangle visibility graphs for the union of two edge-disjoint caterpillar forests on the same vertex set (the Caterpillar Theorem).
Their approach implies an RSVR for the two edge-disjoint caterpillars, which includes edge-disjoint paths, and, as they point out, extends to some other graph classes.
\comment{
\begin{figure}[t]
\centering
\begin{subfigure}[t]{.45\textwidth}
\centering
\includegraphics{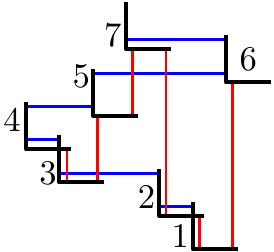}
\end{subfigure}
\hfill
\begin{subfigure}[t]{.45\textwidth}
\centering
\includegraphics{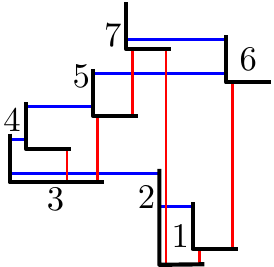}
\end{subfigure}
\caption{Left: The result of running Algorithm \textsf{A} on $\ph=(1,2,3,4,5,6,7)$ and $\pv=(4,3,5,7,2,1,6)$. Note the intersection of $\Gamma(4),\Gamma(3)$ and $\Gamma(2),\Gamma(1)$. Right: The intersections can be alleviated by stretching $\Gamma(3)$ and $\Gamma(2)$. 
}
\label{fig:lsvr_example}
\end{figure}
}

It is easy to check that the rectangles placed by Algorithm \textsf{A} satisfy all the required visibilities. Thus, there exists an RSVR and USSVR of $\pvph$ if and only if $E_\V\cap E_\H=\emptyset$.
In fact, the result holds for any shapes that intersect if both of their $x$ and $y$-projections overlap. We turn our attention, therefore, to shapes that do not obey this property. Surprisingly, this question becomes significantly more complicated 
even for L-shapes which are simply the left and bottom sides of a rectangle.
A \emph{simultaneous visibility representation using fixed orientation L-shapes $ \{\llcorner\} $} (an LSVR) of $\gvgh$ is a pair $\la \Gv,\Gh\ra$ where $\Gv$ is a BVR, $\Gh$ is a BVR rotated by $\pi/2$ (with vertical bars and horizontal visibility), and for all $v\in V$, $\bv(v)=\lh(v)$ (i.e., $\Gv(v)$ and $\Gh(v)$ share their respective bottom and left endpoints).

\subsection{LSVR of two undirected paths}
\label{sec:lsvrTwoPaths} 

Let $\pvph$ be two undirected paths defined on the same set of vertices $V=[n]$.
By relabeling the vertices, we may assume that $P_\H$ is the path $(1,2,\dots,n)$ and $P_\V$ is $(\pi_1, \pi_2, \dots, \pi_n)$ for a permutation $\pi$ of $[n]$.
While the paths are undirected, the algorithm can direct each path in
one of two ways, from left-to-right or from right-to-left, which
corresponds to having the reference endpoint (used by Algorithm
\textsf{A}) at the beginning or end of the path.

\begin{figure}[t]
\centering
\begin{subfigure}[t]{.45\textwidth}
\centering
\includegraphics[page=2]{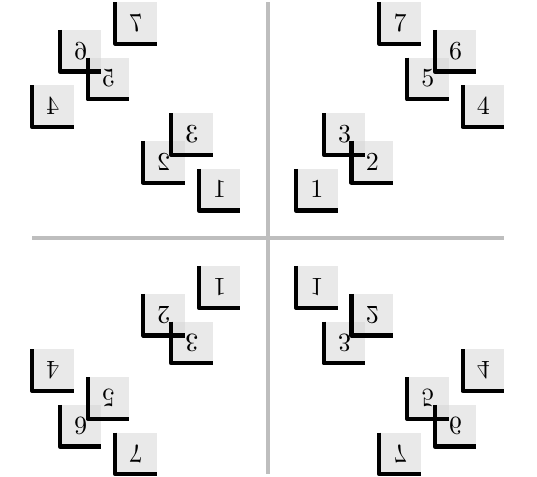}
\caption{For $\pv=(2,1,6,4,5,3)$ (and $\ph=(1,2,3,4,5,6)$)
shapes $\Gamma(4),\Gamma(5)$ and
$\Gamma(1),\Gamma(2)$ intersect in some drawings.
Stretching $\Gamma(1)$ west (or $\Gamma(2)$ south) in the \SW{} drawing removes the $\Gamma(1),\Gamma(2)$ intersection without introducing additional visibilities.}
\label{fig:216453}
\end{subfigure}
\hfill
\begin{subfigure}[t]{.45\textwidth}
\centering
\includegraphics[page=1]{anna_box.pdf}
\caption{For $\pv=(1,3,2,7,5,6,4)$ (and $\ph=(1,2,3,4,5,6,7)$) no stretching in any drawing avoids additional visibilities.  This pair of paths does not have an LSVR.}
\label{fig:1327564}
\end{subfigure}
\caption{Two examples of the four drawings from Algorithm \textsf{A}.}
\end{figure}

\begin{figure}
\centering
\begin{subfigure}[t]{.3\textwidth}
\centering
\includegraphics[page=1]{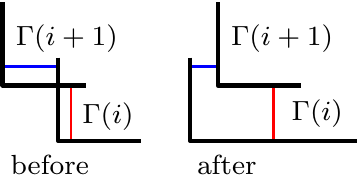}
\caption{}
\label{fig:trans1}
\end{subfigure}
\hfill
\begin{subfigure}[t]{.3\textwidth}
\centering
\includegraphics[page=2]{stretch.pdf}
\caption{}
\label{fig:trans2}
\end{subfigure}
\hfill
\begin{subfigure}[t]{.35\textwidth}
\centering
\includegraphics[page=3]{stretch.pdf}
\caption{}
\label{fig:stretch}
\end{subfigure}
\caption{Two ways to stretch one L in a pair of intersecting L-shapes in a \SW{} drawing to remove a crossing. To perform (a), no other shape can have a visibility with $\G(i+1)$ from the south. To perform (b), no other shape can have a visibility with $\G(i)$ from the west.
(c) For $\pv=(5,3,2,4,1)$, multiple shapes stretch to remove one crossing and to avoid a new vertical visibility. 
}
\label{fig:transformations}
\end{figure}

We construct an LSVR for $\pvph$ as follows:
(1) Run Algorithm \textsf{A} using L-shapes for all four choices of
the two paths' orientations.
These correspond to using the South+West (\SW), South+East (\SE), North+West (\NW), and
North+East (\NE) sides of the $(1+\epsilon)\times(1+\epsilon)$ (possibly
overlapping) squares drawn
by Algorithm \textsf{A} when it attempts to draw a USSVR of the two paths and then reflecting the drawing vertically or horizontally, see Fig.~\ref{fig:216453}.
We assume that $P_\H=(1,2,\dots,n)$ and the reference point for each
path is its leftmost vertex in the \SW{} drawing, so $\G(i)$ and
$\G(i+1)$ intersect if and only if 
$(i+1,i)\in \pv$ for the \SW{} and \NE{} drawings
or
$(i,i+1)\in \pv$ for the \SE{} and \NW{} drawings.
For a given pair of paths, all or two or none of these drawings
contain intersecting L's.
(2) If some drawing does not contain intersections, we're done.
Otherwise, ``stretch'' some of the L's in some drawing so that each pair of
intersecting L's ``nest'' while preserving their existing
visibilities, see Fig.~\ref{fig:transformations}.
This is not always possible, see Fig.~\ref{fig:1327564}, or it may be possible in only some of the four drawings, see Fig.~\ref{fig:216453}.
We prove that an LSVR for $\pvph$ exists if and only if this stretching transformation is possible for some drawing
(Theorem~\ref{thm:lsvrtwopaths}) by describing a drawing procedure (Section~\ref{sec:Sufficiency}) and showing that if this procedure is
unable to produce an LSVR then no LSVR for $\pvph$ exists (Section~\ref{sec:Necessity}).

Given, for example, a \SW{} drawing, we would like to know what intersections of L's can be removed by stretching.
If $\pi_1 > \pi_2 > \dots > \pi_c$ 
then the shapes for vertices $\pi_1, \pi_2, \dots, \pi_c$
are not only the $c$ west-most (i.e., leftmost) shapes in increasing $x$-order,
they are also in decreasing $y$-order.
By stretching every one of these shapes westward to reverse the $x$-order of their vertical bars, we can eliminate any intersection of these shapes while maintaining exactly the same vertical (and, of course, horizontal) visibilities, see Fig.~\ref{fig:stretch}.
A similar transformation can reverse the $y$-order of the south-most shapes to remove intersections under symmetric conditions.

We first describe the sequences of reversible vertices and then state the conditions that allow a drawing to be stretched to remove intersections.
We say a sequence $S=(s_1, s_2, \dots, s_k)$ is \emph{increasing} (\emph{decreasing}) in a sequence $T=(t_1, t_2, \dots, t_n)$ if
there exist indices $(j_1, j_2, \dots, j_k)$ that are strictly increasing (decreasing) such that $s_i = t_{j_i}$ for all $i \in [k]$.
A sequence $S$ is \emph{monotonic} in $T$ if it is either increasing or decreasing in $T$.
For example, $(4,7,3)$ is monotonic in $(1,3,2,7,5,6,4)$ but $(3,4,7)$ is not.

\begin{definition}
\label{def:crossing_parts}
For $P_\H = (1,2, \dots, n)$ and $P_\V = (\pi_1, \pi_2, \dots, \pi_n)$ where $\pi$ is a permutation of $[n]$, let 
\begin{enumerate}
\item $\phprim = (1,2,\dots,a)$ be the longest such sequence monotonic in $\pi$,
\item $\pvprim = (\pi_1,\pi_2,\dots,\pi_c)$ be the longest such sequence monotonic in $[n]$, and 
\end{enumerate}
\end{definition}
For example, if $P_\V = (1,3,2,7,5,6,4)$, then $\phprim = ( 1,2 )$
and
$\pvprim = (1,3)$
. 
The \SW{} drawing can be stretched to remove all intersections if
\begin{equation}
\label{sw}
\text{for all } (i+1,i)\in P_\V, i\in \pvprim \text{ or } i+1\in \phprim.
\end{equation}

By changing the reference endpoints of the two paths, we obtain the
\NW{}, \SE{}, and \NE{} drawings.  Each of these can be stretched to
remove all intersections if condition~\eqref{sw} holds, after
renumbering the vertices so that the (possibly re-oriented) path
$P_\H$ is $(1,2,\dots,n)$.


The rest of Section \ref{sec:lsvrTwoPaths} will focus on proving the following theorem.

\begin{theorem}
\label{thm:lsvrtwopaths}
Let $\pvph$ be two paths defined on the same set of $n$
vertices. There exists an LSVR of $\pvph$ if and only if
condition~\eqref{sw} holds for at least one drawing \SW, \NW, \SE, \NE.
In the positive case, the LSVR is realizable in $O(n)$ time on a grid of size $O(n)\times O(n)$. 
\end{theorem}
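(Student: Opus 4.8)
Suppose condition~\eqref{sw} holds for one of the four drawings; by the symmetry noted after~\eqref{sw} we may assume it is the \SW{} drawing. The plan is to carry out the stretching procedure sketched before Definition~\ref{def:crossing_parts}: run Algorithm~\textsf{A} with L-shapes, then simultaneously stretch every shape of $\pvprim$ westward so that the $x$-order of their vertical bars is reversed, and every shape of $\phprim$ southward so that the $y$-order of their horizontal bars is reversed. I would first record the elementary observation that the only overlapping pairs in the \SW{} drawing are $\{\G(i),\G(i+1)\}$ with $(i+1,i)\in \pv$, and note that since $\pvprim$ is a prefix of $\pi$ and $\phprim$ a prefix of $\ph$, condition~\eqref{sw} forces, for each such pair, both $i$ and $i+1$ to lie in $\pvprim$ or both to lie in $\phprim$. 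Then I would verify that the two stretch operations are well-defined and compatible (in particular on a shape belonging to both $\pvprim$ and $\phprim$), that they convert every overlapping pair into a properly nested pair without creating any new overlap --- here monotonicity of $\pvprim$ in $[n]$ and of $\phprim$ in $\pi$ is exactly what makes the reversed staircases of L's overlap-free --- and that they neither add nor remove any vertical or horizontal line of sight, the key point being that each intermediate shape of a reversed monotone run still spans the full horizontal (resp.\ vertical) extent between its path-neighbours and hence blocks exactly the sightlines it blocked before. Finally, all coordinates remain within an $O(n)\times O(n)$ grid, and $\pi$, $\pi^{-1}$, the sequences $\pvprim,\phprim$, and condition~\eqref{sw} for each of the four drawings are all computable in $O(n)$ time, so the construction runs in $O(n)$ time.

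\textbf{Necessity.} Conversely, suppose $\Gamma$ is an LSVR of $\pvph$; I must exhibit one of the drawings \SW, \NW, \SE, \NE for which~\eqref{sw} holds. The plan is a normalization argument: I would show that $\Gamma$ is, up to the four reflection symmetries, combinatorially a \emph{stretched} Algorithm~\textsf{A} drawing, and that the stretch it realizes is supported only on reversible monotone runs, which is precisely condition~\eqref{sw} for the corresponding drawing. Concretely, using Properties~\ref{prop:ThinOverlap} and~\ref{prop:nextBarsSameSide} (and Lemmas~\ref{le:Nestedness}, \ref{le:NoTwist}) I would first constrain the orders in which the shapes appear along the two axes, isolating four cases according to the reading direction of $\pv$ and of $\ph$ in $\Gamma$; after relabelling and reflecting, assume the \SW{} case. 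Next I would show, edge by edge of $\pv$ and of $\ph$, that the coupling of the L-corner (bottom of the horizontal bar $=$ left end of the vertical bar) together with the required visibility along that edge and the forbidden visibilities to non-neighbours pins down the relative placement of the two corners to agree with the \SW{} Algorithm~\textsf{A} drawing, except that some horizontal arms may reach further west and some vertical arms further south. Thus $\Gamma$ is combinatorially a drawing obtained from the \SW{} Algorithm~\textsf{A} drawing by such arm extensions. But the overlapping pairs of that drawing are exactly $\{\G(i),\G(i+1)\}$ with $(i+1,i)\in\pv$, and in $\Gamma$ each of these is separated; I would argue that a pair of L's in this corner-configuration can be separated by westward/southward arm extension only when both shapes belong to a common monotone run that is uniformly extended --- i.e.\ only when $i\in\pvprim$ or $i+1\in\phprim$. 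Hence~\eqref{sw} holds for \SW, completing the proof.

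\textbf{Main obstacle.} I expect the normalization step in the necessity direction to be the heart of the matter. The subtlety is that a bar visibility representation of a path need not place its bars in monotone order, so one cannot simply assert that $\ph$ is drawn left to right; the ``twisted'' layouts must be excluded using the interaction between the two visibility directions that is forced by the shared L-corner, and one must then track carefully which arm extensions are available without introducing a spurious line of sight. I anticipate this being the longest and most delicate part (Section~\ref{sec:Necessity}), with Section~\ref{sec:Sufficiency} being a comparatively routine verification of the stretching construction above.
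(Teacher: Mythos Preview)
Your proposal is essentially correct and follows the same two-part strategy as the paper: construct an LSVR by stretching the \SW{} Algorithm~\textsf{A} drawing when condition~\eqref{sw} holds, and for necessity, normalize an arbitrary LSVR to a monotone form and argue that any deviation from the strict Algorithm~\textsf{A} order is confined to the prefixes $\pvprim$ and $\phprim$.

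A few differences in execution are worth noting. For sufficiency, the paper's \lsvrpaths algorithm only stretches $\pvprim$ (resp.\ $\phprim$) when that prefix is \emph{decreasing} and actually contains a crossing (Lemma~\ref{prop:reversed}); your unconditional ``stretch every shape of $\pvprim$'' would need this guard, since reversing an increasing prefix would create new intersections. The paper also stretches $\phprim\setminus\pvprim$ rather than all of $\phprim$ in the second pass, using Lemma~\ref{prop:AwithoutC} to show this residual set is itself a prefix. For necessity, the paper does not invoke Lemmas~\ref{le:Nestedness} and~\ref{le:NoTwist} (those are for the hardness gadgets); the workhorses are instead Property~\ref{prop:cycle} (a sandwiched bar endpoint forces a cycle) and Property~\ref{prop:tower} (containment of one bar in another propagates down the path). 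With these, the paper first shows any LSVR can be made \emph{canonical} (noncollinear and monotone, Lemma~\ref{lem:monotone}), then proves the sharper statement that outside the prefixes the representation must be \emph{strictly} increasing (Lemma~\ref{lem:strict}); the failure of~\eqref{sw} at some $(i+1,i)$ then forces $\b(i)<\b(i+1)<\t(i)$ and $\l(i+1)<\l(i)<\r(i+1)$ simultaneously, which is geometrically impossible for two $\llcorner$-shapes. This ``strictly increasing outside the prefix'' formulation is exactly your ``arm extensions are available only on monotone runs'' recast, so the arguments coincide once you swap in the right tools.
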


Note that since condition~\eqref{sw} can be tested in linear time,
Theorem~\ref{thm:lsvrtwopaths} yields a linear time algorithm to
determine if two given paths admit an LSVR.

\subsubsection{Condition~\eqref{sw} implies LSVR}
\label{sec:Sufficiency}
In this section we present an algorithm which constructs an LSVR for a
pair of paths assuming condition~\eqref{sw} holds for some
drawing.
We describe the algorithm for the \SW{} drawing.
The other drawings are checked in a similar fashion after 
changing the reference endpoints and renumbering the vertices so that
the (possibly re-oriented) path $P_\H$ is $(1,2,\dots,n)$.

\begin{figure}
\centering
\begin{subfigure}[t]{.2\textwidth}
\centering
\includegraphics[page=1,width=\linewidth]{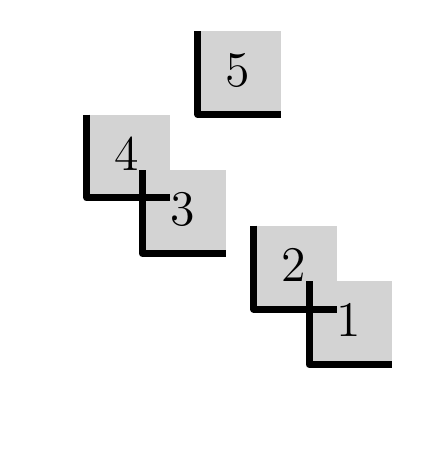}
\caption{Step 1}
\label{fig:lsvr_paths1}
\end{subfigure}
\hfill
\begin{subfigure}[t]{.2\textwidth}
\centering
\includegraphics[page=2,width=\linewidth]{lsvr_paths.pdf}
\caption{Step 2}
\label{fig:lsvr_paths2}
\end{subfigure}
\hfill
\begin{subfigure}[t]{.2\textwidth}
\centering
\includegraphics[page=3,width=\linewidth]{lsvr_paths.pdf}
\caption{Step 3}
\label{fig:lsvr_paths3}
\end{subfigure}
\caption{Running \lsvrpaths on $ \pv = (4, 3, 5, 2, 1) $. In Step 1, we run Algorithm \textsf{A} to produce a \SW{} drawing. In Step 2, we stretch $\G(\pi_i)$ to the left for $ \pi_i \in \pvprim = \{4, 3\} $. In Step 3, we stretch $ \G(i) $ downwards for $ i \in \phprim \setminus \pvprim = \{1, 2\} $.}
\label{fig:lsvr_paths}
\end{figure}

\paragraph{\lsvrpaths Algorithm.} Let $\la \pv=(\pi_1,\pi_2, \dots ,\pi_n), \ph=(1,2,\dots,n) \ra$ be two paths defined on the same vertex set $[n]$.
We break the algorithm into three steps. 

\textbf{Step 1}: Run Algorithm \textsf{A} to produce a \SW{} drawing (with possible intersections).

\textbf{Step 2}: Let $\pvprim=\{\pi_1,\dots,\pi_c\}$. If $\pi_1>\pi_2>\dots>\pi_c$ and there are crossings in $\G(\pvprim)$, then for all $\pi_i\in\pvprim$, stretch $\G(\pi_i)$ to the left such that $\l(\pi_i)=2-i$.

\textbf{Step 3}: Let $\phprim=\{1,\dots,a\}$. If
$(1,2,\dots,a)$ is decreasing in $\pi$
and there are crossings in $\G(\phprim)$, then for all $i\in\phprim\setminus\pvprim$, stretch $\G(i)$ downwards such that $\b(i)=2-i$.

See Fig.~\ref{fig:lsvr_paths} for an illustration of the algorithm.
Note  that we could have stretched $\phprim$ in step 2 and $\pvprim\setminus\phprim$ in step 3.
Observe that \lsvrpaths requires linear time. Furthermore, the layout
is contained in $[2-n,n]\times[2-n,n]$, i.e., a grid of size
$O(n)\times O(n)$. Hence, the following lemma proves one direction of
Theorem \ref{thm:lsvrtwopaths}.

\begin{lemma}
\label{lem:algCorrect}
If $\pvph$ are two paths defined on the same vertex set and
condition~\eqref{sw} is true then Algorithm \lsvrpaths returns an LSVR
of $\pvph$.
\end{lemma}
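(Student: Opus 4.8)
The plan is to verify that the drawing produced by \lsvrpaths{} is a valid LSVR by checking three things: (i) the stretching operations in Steps~2 and~3 are actually applicable whenever condition~\eqref{sw} holds, (ii) after the stretches there are no intersecting L-shapes, and (iii) the stretches neither destroy any required visibility nor create any spurious one. Recall that the raw \SW{} drawing from Algorithm~\textsf{A} places $\G(v)$ with corner at $(i,j)$ where $i$ is $v$'s position along $\pv$ and $j$ its position along $\ph=(1,\dots,n)$; by the discussion preceding the lemma, $\G(i)$ and $\G(i+1)$ intersect exactly when $(i+1,i)\in\pv$, and every required visibility is present in this drawing because consecutive path vertices have L-shapes whose bars are adjacent and mutually visible. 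So the only defects are the crossings, and the entire task is to remove them without side effects.

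First I would argue applicability. By Definition~\ref{def:crossing_parts}, $\pvprim=(\pi_1,\dots,\pi_c)$ is monotonic in $[n]$; in the \SW{} setting the relevant case (the one that produces leftmost crossings) is $\pi_1>\pi_2>\dots>\pi_c$, so the hypothesis of Step~2 is exactly the ``decreasing'' branch and there is nothing to check beyond observing that if $\pvprim$ is increasing then $\G(\pvprim)$ has no crossings among its members. Similarly $\phprim=(1,2,\dots,a)$ is the longest prefix of $\ph$ monotonic in $\pi$; if it is decreasing in $\pi$ then $\G(1),\dots,\G(a)$ are the $a$ bottom-most shapes in decreasing $y$-order, matching Step~3's hypothesis. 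The key point is that condition~\eqref{sw} guarantees that \emph{every} crossing pair $\G(i),\G(i+1)$ (i.e.\ every $(i+1,i)\in\pv$) has either $i\in\pvprim$ or $i+1\in\phprim$, so each crossing is "covered" by one of the two monotone runs and therefore removed by one of the two steps. One must also handle the case where $i\in\pvprim$ but $\pvprim$ is increasing (hence no Step-2 stretch happens): then I claim there are simply no crossings in $\G(\pvprim)$ at all, because increasing $x$-order together with increasing position-in-$\pv$ forces increasing $y$-order, so adjacent shapes are stacked, not overlapping — and symmetrically for $\phprim$.

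Next I would verify crossing removal and visibility preservation, which is the technical heart. For Step~2: stretching each $\G(\pi_i)$ leftward to $\l(\pi_i)=2-i$ reverses the left-to-right order of the vertical bars of $\G(\pi_1),\dots,\G(\pi_c)$ while leaving their $y$-coordinates (and horizontal extents of the horizontal bars, hence all horizontal visibilities) untouched; since these were the $c$ leftmost shapes, the extended bars lie entirely to the left of all other shapes, so no new vertical visibility with an outside shape is created, and among themselves the reversal turns every overlapping pair into a nested pair (as in Fig.~\ref{fig:stretch}), eliminating intersections without changing which consecutive pairs see each other vertically. The same argument, rotated, handles Step~3, with the one subtlety that Step~3 only restretches $i\in\phprim\setminus\pvprim$ so as not to disturb shapes already moved in Step~2; I would check that a vertex in $\phprim\cap\pvprim$ is already positioned consistently with both reversals (this is where the note "we could have stretched $\phprim$ in step~2" is relevant — the two monotone runs overlap only in a prefix and their stretch targets agree there). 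Finally, combining: after both steps every crossing pair $\G(i),\G(i+1)$ with $(i+1,i)\in\pv$ has been nested (via Step~2 if $i\in\pvprim$, via Step~3 if $i+1\in\phprim$, using \eqref{sw} to know one holds), no spurious visibilities were introduced, all original visibilities survive, and the bottom-left endpoints still coincide ($\bv(v)=\lh(v)$) since stretching moves only the far ends of the two bars — so $\G$ is an LSVR.

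The main obstacle I expect is the bookkeeping in case analysis of which step removes which crossing and, in particular, ensuring Steps~2 and~3 do not interfere: a vertex can lie in $\pvprim$, in $\phprim$, in both, or in neither, and a crossing pair $(i,i+1)$ can be covered by Step~2 through $i$, by Step~3 through $i+1$, or (potentially) by both. I would need a clean claim that the final $x$-coordinate assigned to any vertex by the (at most one) stretch it undergoes is consistent — i.e.\ the leftward targets $2-i$ for $\pvprim$ and downward targets $2-i$ for $\phprim\setminus\pvprim$ never conflict because they act on disjoint coordinates (Step~2 touches $x$, Step~3 touches $y$) — and that no shape outside both runs ever ends up between a stretched bar and its intended visibility partner, which follows from the runs being prefixes (leftmost / bottom-most shapes). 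Packaging these observations as a couple of short claims, with Fig.~\ref{fig:lsvr_paths} as the running example, should make the verification routine.
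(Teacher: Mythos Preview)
Your proposal is correct and follows essentially the same route as the paper: the paper decomposes the proof into four short lemmas---Lemma~\ref{prop:reversed} (crossings in $\G(\pvprim)$ or $\G(\phprim)$ occur only in the decreasing branch, matching your ``applicability'' check), Lemma~\ref{prop:AwithoutC} ($\phprim\setminus\pvprim$ is a prefix $\{1,\dots,a'\}$, which is exactly the ``clean claim'' you anticipate needing for the Step~2/Step~3 interaction), Lemma~\ref{lem:visibilitiesCorrectAlg} (visibilities are exactly right), and Lemma~\ref{lem:noCrossingsAlg} (no crossings remain). Your outlined case analysis and the obstacles you flag line up one-for-one with these lemmas; the only cosmetic difference is that the paper argues visibilities by computing explicit projection intervals rather than by your ``leftmost/bottom-most so nothing else is in the way'' phrasing, and one small slip in your write-up (the overlap $\phprim\cap\pvprim$ is a \emph{suffix} of $\phprim$, not a prefix) is harmless once you state Lemma~\ref{prop:AwithoutC} precisely.
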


We break the proof into a sequence of lemmas. 

\begin{lemma}
\label{prop:reversed}
Let $\phprim=\{1,\dots,a\}$ and $\pvprim=\{\pi_1,\dots,\pi_c\}$. 
After Step 1 there is a crossing among $\G(\phprim)$ only if $(1,2,\dots,a)$ is decreasing in $\pi$ and among $\G(\pvprim)$ only if $\pi_1>\pi_2>\dots>\pi_c$. 
\end{lemma}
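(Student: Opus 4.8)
The plan is to analyze the structure of the \SW{} drawing produced by Algorithm~\textsf{A} and to show that crossings among $\G(\phprim)$ (resp.\ $\G(\pvprim)$) force the monotonicity conclusion. Recall that in the \SW{} drawing, $\G(i)$ and $\G(i+1)$ intersect if and only if $(i+1,i)\in\pv$, i.e.\ $i+1$ immediately precedes $i$ in $\pv$; more generally, two shapes $\G(j)$ and $\G(k)$ with $j<k$ can only intersect if they are ``out of order'' in one coordinate relative to the other, since the bottom-left corner of $\G(v)$ sits at $(x_\H(v), x_\V(v))$ where $x_\H,x_\V$ are the positions of $v$ along $\ph$ and $\pv$, and the boxes have side $1+\eps$. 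So $\G(j)\cap\G(k)\neq\emptyset$ (for $j\neq k$) requires $|x_\H(j)-x_\H(k)|\le 1$ \emph{and} $|x_\V(j)-x_\V(k)|\le 1$, hence $k=j+1$ in $\H$-order and $k,j$ adjacent in $\pv$ as well — i.e.\ a crossing between $\G(j)$ and $\G(j+1)$ occurs exactly when $j$ and $j+1$ are consecutive in $\pv$ with $j+1$ before $j$.

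First I would handle $\G(\pvprim)$. By Definition~\ref{def:crossing_parts}, $\pvprim=(\pi_1,\dots,\pi_c)$ is the longest prefix of $\pv$ that is monotonic in $[n]$ — so it is either increasing or decreasing. Suppose it is increasing, $\pi_1<\pi_2<\dots<\pi_c$. Then for any two vertices $\pi_s,\pi_t$ of $\pvprim$, their $\H$-order ($=$ their natural order) agrees with their $\pv$-order; in particular no pair $\{j,j+1\}\subseteq\pvprim$ has $j+1$ occurring before $j$ in $\pv$ (that would contradict the increasing order of the prefix), so by the characterization above no two shapes of $\G(\pvprim)$ cross. Contrapositively, a crossing in $\G(\pvprim)$ forces $\pi_1>\pi_2>\dots>\pi_c$. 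The argument for $\G(\phprim)$ is symmetric under the $\V\leftrightarrow\H$ exchange: $\phprim=(1,2,\dots,a)$ is by definition the longest prefix of $\ph$ monotonic in $\pi$, so $(1,\dots,a)$ is either increasing or decreasing \emph{as it appears in $\pv$}. If it is increasing in $\pi$, then for any $i,i'\in\phprim$ the $\pv$-order of $i,i'$ matches their $\ph$-order, so again no pair $\{i,i+1\}\subseteq\phprim$ is reversed in $\pv$, hence no crossing among $\G(\phprim)$; contrapositively a crossing forces $(1,\dots,a)$ to be decreasing in $\pi$.

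One point deserving care is that a ``crossing among $\G(\phprim)$'' a priori might mean a shape $\G(i)$ with $i\in\phprim$ crossing some shape $\G(j)$ with $j\notin\phprim$; the lemma statement and its use in Step~3 really concern crossings internal to $\G(\phprim)$ (pairs both in $\phprim$), and I would state it that way, matching the phrasing ``crossings in $\G(\phprim)$'' in the algorithm. The only genuine content is the reduction, via the corner-placement formula and box side $1+\eps$, to the fact that $\G(j),\G(k)$ intersect iff $j,k$ are adjacent and reversed in \emph{both} path-orders — and since within $\phprim$ (resp.\ $\pvprim$) the two orders are linked by the defining monotonicity, the only way to be reversed in $\pv$-order is to have the whole prefix decreasing. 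The main obstacle, such as it is, is simply being precise about the intersection criterion for unit(-ish) squares placed at integer-ish corners and confirming that adjacency in $\ph$-order is forced (not just adjacency in $\pv$-order) — but this is exactly the observation already used to state when $\G(i),\G(i+1)$ intersect, so it is immediate.
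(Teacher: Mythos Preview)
Your proof is correct and follows essentially the same approach as the paper's: both argue that if the relevant prefix is increasing (rather than decreasing) in the other path, then the corners of the shapes increase monotonically in both coordinates, so no two L-shapes in that set can cross. The paper's proof is a single sentence to this effect (observing $\l(i)<\l(i+1)$ and $\b(i)<\b(i+1)$); your version unpacks the intersection criterion for the \SW{} drawing more explicitly but reaches the same conclusion by the same route.
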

\begin{proof}
If $(1,2,\dots,a)$ is increasing in $\pi$, then after Step 1 we have $\l(\pi_i)<\l(\pi_{i+1})$ and $\b(i)<\b(i+1)$ for all $i<a$, hence there are no crossings in $\G(\phprim)$. The argument for $\pvprim$ is similar. 
\end{proof}

\begin{lemma}
\label{prop:AwithoutC}
Let $\phprim=\{1,\dots,a\}$. If
$(1,2,\dots,a)$ is decreasing in $\pi$,
then either $\phprim\setminus\pvprim=\emptyset$ or $\phprim\setminus\pvprim=\{1,\dots,a'\}$ for some $a'\leq a$. 
\end{lemma}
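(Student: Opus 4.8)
The plan is to analyze the structure of $\pvprim = (\pi_1, \dots, \pi_c)$ relative to $\phprim = (1, \dots, a)$ under the hypothesis that $(1, 2, \dots, a)$ is decreasing in $\pi$. The key observation is that ``$(1,\dots,a)$ decreasing in $\pi$'' means that the vertices $1, 2, \dots, a$ appear in $P_\V = (\pi_1, \dots, \pi_n)$ in the reverse order, i.e., $a$ comes first among them, then $a-1$, and so on, with $1$ last. First I would argue that $\pvprim$, being the longest prefix of $\pi$ that is monotonic in $[n]$, must actually be \emph{decreasing} in $[n]$ whenever it has length at least $2$: indeed $\pi_1$ is some vertex, and since $1, \dots, a$ occur in decreasing order in $\pi$, if $\pi_1 \le a$ then the next element of $\{1,\dots,a\}$ appearing in $\pi$ is smaller than $\pi_1$; combined with maximality this forces the monotonic prefix to be decreasing. (If $\pi_1 > a$ one needs a short separate check, but in that case $1 \notin \pvprim$ or the prefix behaves compatibly; I would handle this by noting $\phprim \setminus \pvprim$ is determined by which small-numbered vertices fail to lie in the prefix.)

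Next, I would show that $\pvprim \cap \phprim$ is itself a \emph{suffix} of $\phprim$, i.e., of the form $\{a'+1, \dots, a\}$ for some $a' \le a$ (possibly empty, i.e.\ $a' = a$, or all of $\phprim$, i.e.\ $a' = 0$). The reasoning: $\pvprim$ is a prefix $(\pi_1, \dots, \pi_c)$ of $\pi$, and within $\pi$ the elements of $\phprim = \{1,\dots,a\}$ appear in the order $a, a-1, \dots, 1$; hence the elements of $\phprim$ that fall inside the prefix $(\pi_1, \dots, \pi_c)$ are exactly an initial segment of this reversed list, namely $a, a-1, \dots, a'+1$ for some threshold $a'$. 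Equivalently, $\pvprim \cap \phprim = \{a'+1, \dots, a\}$. Therefore $\phprim \setminus \pvprim = \{1, \dots, a'\}$, which is either empty (when $a' = 0$) or of the claimed form with $a' \le a$.

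The one point requiring care — and the main obstacle — is verifying that $\pvprim \cap \phprim$ really is a contiguous suffix of $\phprim$ and not some scattered subset: this hinges on the fact that \emph{all} of $1, \dots, a$ appear in $\pi$ in strictly decreasing order (the hypothesis), so ``being among the first $c$ positions of $\pi$'' is a downward-closed condition on this reversed sequence, hence picks out a prefix $a, a-1, \dots, a'+1$ of it. I would also need to double-check the degenerate cases: if $\pvprim$ is increasing rather than decreasing in $[n]$ (which can only happen if $c = 1$, since a length-$\ge 2$ prefix that is increasing in $[n]$ contradicts the decreasing-in-$\pi$ hypothesis for any two elements of $\phprim$ it contains — unless it contains at most one such element), then $\pvprim \cap \phprim$ has size at most $1$ and equals $\{a\}$ or $\emptyset$, consistent with the statement. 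Assembling these observations yields the lemma.
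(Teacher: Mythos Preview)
Your core argument (the second paragraph) is correct and matches the paper's proof: since $\pvprim$ is a prefix of $\pi$ and the elements $1,\dots,a$ appear in $\pi$ in the order $a,a-1,\dots,1$, those lying in the prefix form a suffix $\{a'+1,\dots,a\}$ of $\phprim$, giving $\phprim\setminus\pvprim=\{1,\dots,a'\}$. Your first and third paragraphs, analyzing whether $\pvprim$ is increasing or decreasing in $[n]$, are unnecessary detours---the argument uses only that $\pvprim$ is a \emph{prefix} of $\pi$, not its monotonicity direction, and the paper's three-sentence proof dispenses with that discussion entirely.
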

\begin{proof}
Let $i\in \phprim\cap\pvprim$ (if no such $i$ exists, we are done).
Suppose $i+1\in\phprim$.
Since $(1,2,\dots,a)$ is decreasing in $\pi$, $i+1$ must precede $i$ in $\pi$.
Since $\pvprim$ is a prefix of $\pi$ that contains $i$, we have $i+1 \in \pvprim$.
\end{proof}

\begin{lemma}
\label{lem:visibilitiesCorrectAlg}
At the end of \lsvrpaths, the required visibilities 
are present in $\G$ and no others. 
\end{lemma}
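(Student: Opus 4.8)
\textit{Setup.} Let me write $\G$ for the drawing produced by \lsvrpaths under the standing assumption that condition~\eqref{sw} holds for the \SW{} orientation, and recall that Step 1 places $\G(v)$ as the \SW{} L-shape of the unit-ish square with lower-left corner $(\proj\text{-index along }\pv,\ \proj\text{-index along }\ph)$. The target visibilities are exactly: a vertical visibility between $\G(i)$ and $\G(j)$ iff $ij\in E_\V$, i.e.\ iff $i,j$ are consecutive in $\pi$; and a horizontal visibility between $\G(i)$ and $\G(j)$ iff $ij\in E_\H$, i.e.\ iff $|i-j|=1$. The plan is to show (a) every required visibility is present, (b) no two L-shapes intersect, and (c) no extra visibility is created. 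I would handle the three claims in that order, and I expect (c) to be the main obstacle.

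\textit{Required visibilities are preserved (claim (a)).} After Step 1, consecutive shapes along $\ph=(1,\dots,n)$ see each other horizontally (their horizontal bars are at heights $\b(i)$ differing by $1$, hence $Y_\G(i)$ and $Y_\G(i{+}1)$ overlap, and nothing lies between in $x$), and consecutive shapes along $\pv$ see each other vertically for the symmetric reason; this is the correctness of Algorithm~\textsf{A} for a USSVR when $E_\V\cap E_\H=\emptyset$, which I may invoke. I then argue that Steps~2 and~3 only \emph{stretch} shapes leftward/downward, which can only enlarge $X_\G(v)$ (for Step~2 vertices) or $Y_\G(v)$ (for Step~3 vertices); stretching a shape's vertical bar west cannot destroy a horizontal visibility it had (the bar is still there, only longer) and cannot destroy a vertical visibility through its horizontal bar (that bar's $x$-extent only grows); symmetrically for Step~3. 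The one thing to check carefully is that stretching $\G(\pi_i)$ west so that $\l(\pi_i)=2-i$ does not push its horizontal bar \emph{under} a shape that used to see it vertically from below along $\pv$ — but the Step~2 vertices form a prefix of $\pv$ that is consecutive in $\pi$, so the only $\pv$-neighbor of $\pi_i$ outside $\pvprim$ is $\pi_{c+1}$, whose $x$-interval lies to the right of all of $\pvprim$ and is untouched; I would spell this out. An analogous remark handles Step~3 using Lemma~\ref{prop:AwithoutC} (which tells me $\phprim\setminus\pvprim$ is again a $\ph$-prefix $\{1,\dots,a'\}$).

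\textit{No intersections, no spurious visibilities (claims (b),(c)).} For (b): by Lemma~\ref{prop:reversed}, a crossing among $\G(\pvprim)$ after Step~1 forces $\pi_1>\dots>\pi_c$, i.e.\ these $c$ shapes are the $c$ leftmost in increasing $x$-order but decreasing $y$-order; Step~2 relabels their $x$-left-endpoints to $1,0,-1,\dots,2-c$, reversing the $x$-order of the vertical bars so that shape $\pi_i$ now nests over $\pi_{i+1}$ rather than crossing it — exactly the stretch illustrated in Fig.~\ref{fig:stretch}; I would verify the nesting is total and hence intersection-free. Step~3 does the same for $\phprim\setminus\pvprim$ in the $y$-direction, and by condition~\eqref{sw} every \emph{other} intersecting pair $\{(i{+}1),i\}$ with $(i{+}1,i)\in\pv$ has $i\in\pvprim$ or $i{+}1\in\phprim$, so it is resolved by one of the two stretches; any intersecting pair not of this form does not exist after Step~1. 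For (c), the crux: I must show the stretches introduce no new horizontal visibility among $\G(\pvprim)$ and no new vertical visibility among $\G(\phprim)$ (and none between a stretched and an unstretched shape). The key point is that within $\pvprim$ the reversed shapes are stretched \emph{west}, into the region $x\le 1$ where, by Property~\ref{prop:ThinOverlap}/Lemma~\ref{le:NoTwist}-type reasoning, only shapes of the $\pv$-prefix live; their vertical bars now sit at distinct $x$-coordinates $1,0,\dots,2-c$, and because after Step~1 their $y$-intervals were a decreasing staircase, shape $\pi_i$'s bar is blocked from $\pi_j$ ($j<i-1$) by $\pi_{i-1}$'s horizontal bar — so no new horizontal visibility beyond the consecutive ones in $\pi$ appears. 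The symmetric argument, using Lemma~\ref{prop:AwithoutC} to know $\phprim\setminus\pvprim$ is a clean $\ph$-prefix, shows Step~3 creates no new vertical visibility. Finally I would note the two stretch regions ($x\le1$ and $y\le1$) interact only at the shapes in $\phprim\cap\pvprim$, which Step~3 deliberately skips, so the two analyses do not conflict. The hardest part is this last bookkeeping — confirming that after \emph{both} stretches the staircase structure is preserved well enough that no bar escapes the ``shadow'' of its predecessor — and that is where I would spend the most care, likely with an explicit coordinate description of $\G$ after Steps~2 and~3.
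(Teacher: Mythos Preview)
Your proposal has the right instincts but conflates this lemma with the next one and leaves a real gap in the visibility argument.

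First, note that Lemma~\ref{lem:visibilitiesCorrectAlg} is \emph{only} about visibilities; the absence of crossings is the content of the separate Lemma~\ref{lem:noCrossingsAlg}. So your part (b) is out of scope here, and you should not rely on non-intersection when arguing about visibilities.

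Second, the paper's route is quite different from yours and considerably cleaner: it does a direct coordinate computation. Using Lemma~\ref{prop:AwithoutC} to write $\phprim\setminus\pvprim=\{1,\dots,a'\}$, it observes that for $i\ge a'$ the $y$-interval is exactly $[i,i+1+\eps]$, so $\projy(i)\cap\projy(i+1)=[i+1,i+1+\eps]$ is touched by no other shape, giving exactly the required horizontal visibility and no others among these vertices. For the stretched prefix $\{1,\dots,a'\}$ it computes $\projy(i)=[2-i,i+1+\eps]$, notes $\projy(i)\supset\projy(i-1)$ (nesting), uses the $x$-ordering $\l(1)>\l(2)>\cdots>\l(a')$ (from $(1,\dots,a')$ decreasing in $\pi$) to establish the $\Gh(i)$--$\Gh(i-1)$ visibility, and then uses the nesting to show $\Gh(i-1)$ blocks $\Gh(i)$ from $\Gh(j)$ for $j\le i-2$. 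The vertical case is symmetric. No ``stretching preserves things'' reasoning is needed.

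Third, your qualitative argument has a genuine hole. In (a) you argue that stretching a shape cannot destroy \emph{its own} visibilities, but you never check whether stretching $\G(\pi_i)$ can block or unblock a required visibility between two \emph{other} shapes; extending a horizontal bar west (which is what Step~2 does) certainly can block vertical sight-lines in the newly-covered region, and moving the vertical bar west can cease to block a horizontal sight-line it previously obstructed. (Also, the phrase ``the bar is still there, only longer'' is wrong for the vertical bar of a $\llcorner$: it moves, it does not grow.) Similarly in (c) you only check for new \emph{horizontal} visibilities among $\G(\pvprim)$ after Step~2, but Step~2 lengthens horizontal bars and so could in principle create new \emph{vertical} visibilities; you must also rule these out. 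The paper's explicit-interval approach dispatches all of these concerns at once by showing that the relevant overlap band $[i+1,i+1+\eps]$ (resp.\ the nested intervals for the stretched prefix) is met by exactly the two intended shapes.
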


\begin{proof}
We first prove the claim for horizontal visibilities. By Lemma \ref{prop:AwithoutC}, write $\phprim\setminus\pvprim=\{1,\dots,a'\}$. Consider the vertices $\{a',\dots,n\}$, whose vertical bars are not altered after Step 1. By construction, for all $i\in\{a'+1,\dots,n\}$, $\projy(i)=[i,i+1+\eps]$. Therefore, $\projy(i)\cap\projy(i+1)=[i+1,i+1+\eps]$ and $\projy(j)\cap[i+1,i+1+\eps]=\emptyset$ for all $j\neq i,i+1$ (even if Step 3 is performed). Hence, $\Gh(i)$ and $\Gh(i+1)$ share a horizontal visibility for all $i\in\{a',\dots,n-1\}$. Moreover, $\projy(i)\cap\projy(j)=\emptyset$ for $j\notin\{i-1,i,i+1\}$, so there are no unwanted visibilities among these shapes. If Step 3 is not performed, then the same argument demonstrates that all required horizontal visibilities are present among $[n]$.
Suppose, therefore, that Step 3 is performed. Fix $i\in\{2,\dots,a'\}$. To complete the proof, it suffices to show that $\Gh(i)$ shares a visibility with $\Gh(i-1)$ and does not share a visibility with $\Gh(j)$ for $j\leq i-2$. By construction, $\projy(i)=[2-i,i+1+\eps]$ for all $i\in[a']$. Hence, $\projy(i)\supset\projy(i-1)$. Furthermore, since Step 3 was performed, $(1,2,\dots,a')$ is decreasing in $\pi$ and so $\l(1)>\l(2)>\dots>\l(a')$ demonstrating that $\Gh(j)$ for $j\leq a'$ cannot block the visibility between $\Gh(i)$ and $\Gh(i-1)$. Moreover, $\b(j)\geq a'+1\geq i+1$ for all $j\geq  a'+1$; hence, neither can $\Gh(j)$ for $j\geq a'+1$. Finally, since $\projy(i-1)\supset\projy(i-2)\supset\dots\supset\projy(1)$, we see that $\projy(i-1)$ blocks $\Gh(i)$ from sharing a visibility with $\Gh(j)$ for $j\leq i-2$. The proof of vertical visibilities is almost identical, except that the argument uses $y$-projections. 
\end{proof}

\begin{lemma}
\label{lem:noCrossingsAlg}
After Algorithm \lsvrpaths is complete, there are no crossings among any shapes. 
\end{lemma}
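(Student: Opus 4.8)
The strategy is to (i) pin down which L's can overlap in the raw \SW{} drawing produced by Step~1, (ii) use condition~\eqref{sw} to show that each such overlap is ``owned'' by Step~2 or Step~3, and (iii) verify that the stretches performed in those steps remove the overlaps they own and create no new ones. By the relabeling already described, it suffices to argue for the \SW{} drawing.

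\textbf{Overlaps after Step~1.} Let $\G(\pi_k)$ denote the shape of the vertex occupying position $k$ in $\pi$. After Step~1, $X_\G(\pi_k)=[k,k+1+\eps]$ and $Y_\G(\pi_k)=[\pi_k,\pi_k+1+\eps]$, so $\G(u)$ and $\G(v)$ can meet only when $\{u,v\}$ is consecutive both in value and in position in $\pi$; checking the four ways the two bars of one L can touch the two bars of the other then shows $\G(u)\cap\G(v)\ne\emptyset$ after Step~1 exactly when $\{u,v\}=\{i,i+1\}$ and $(i+1,i)\in\pv$ (the contact point having $x$-coordinate equal to the position of $i$). Call such edges \emph{crossing edges}.

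\textbf{Each crossing edge is owned by a step.} I would prove that for every crossing edge $(i+1,i)$, either $\{i,i+1\}\subseteq\pvprim$ or $\{i,i+1\}\subseteq\phprim\setminus\pvprim$. By~\eqref{sw}, $i\in\pvprim$ or $i+1\in\phprim$. If $i\in\pvprim$ then $i+1\in\pvprim$ too, since $\pvprim$ is a prefix of $\pi$ and $i+1$ immediately precedes $i$. If $i\notin\pvprim$ then $i+1\in\phprim=\{1,\dots,a\}$, hence $i\in\phprim$ as well; and $i+1\in\pvprim$ is impossible in this case, for it would force $i+1=\pi_c$, $i=\pi_{c+1}$, so maximality of $\pvprim$ (using $c\ge 2$, valid since a permutation of $[n]$ with $n\ge 2$ has a monotonic prefix of length at least $2$) would make $\pvprim$ increasing with $\pi_c=i+1$; then $i+1\in\phprim$ together with the fact that $1,2,\dots,a$ occur in $\pi$ in decreasing order of position (forced by the crossing edge) pins $a=i+1$ and places $1,\dots,i$ all after position $c$ in $\pi$, contradicting $\pi_{c-1}\in\pvprim\cap\{1,\dots,i\}$. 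In the first branch the crossing edge is a decreasing consecutive pair inside $\pvprim$, so (with Lemma~\ref{prop:reversed}) Step~2 fires; in the second branch $(1,\dots,a)$ is decreasing in $\pi$, so Lemma~\ref{prop:AwithoutC} gives $\phprim\setminus\pvprim=\{1,\dots,a'\}\supseteq\{i,i+1\}$ and Step~3 fires. Either way both endpoints of the crossing edge get stretched (all of $\pvprim$ by Step~2, resp.\ all of $\{1,\dots,a'\}$ by Step~3).

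\textbf{No overlap survives.} After Step~2 the stretched shapes $\G(\pi_1),\dots,\G(\pi_c)$ have vertical bars at the strictly decreasing $x$-coordinates $1,0,\dots,2-c$ while their (unchanged) heights $\pi_1>\dots>\pi_c$ are also strictly decreasing; hence they are pairwise nested and disjoint. Symmetrically, after Step~3 the stretched shapes $\G(1),\dots,\G(a')$ have horizontal bars at the strictly decreasing $y$-coordinates $1,0,\dots,2-a'$ and strictly decreasing $x$-positions (since $(1,\dots,a)$ is decreasing in $\pi$), so they too are pairwise disjoint. Now suppose $\G(u)\cap\G(v)\ne\emptyset$ after the whole algorithm. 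If neither $u$ nor $v$ was stretched, the overlap makes $\{u,v\}$ a crossing edge, whose endpoints are both stretched --- a contradiction. If both were stretched by the same step, the nesting facts apply. In the mixed cases I would use that a Step-2-stretched $\G(\pi_i)$ has $i\le c$ while a Step-3-stretched $\G(v)$ has $v$'s position in $\pi$ exceeding $c$ (so their $x$-projections can meet only when the two shapes are $\G(\pi_c)$ and $\G(\pi_{c+1})$), that the material \emph{added} to $\G(\pi_i)$ by Step~2 lies at $x$-coordinates $\le i$ and the material added to $\G(v)$ by Step~3 at $y$-coordinates $<v$, and that any overlap involving only the part of a stretched shape present after Step~1 is a crossing edge (already handled). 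Each surviving configuration then collapses: maximality of $\pvprim$ forces one strict inequality between the two relevant values while ``$(1,\dots,a)$ decreasing in $\pi$'' forces the opposite one --- for instance, if $\G(\pi_c)$ is Step-2-stretched and $\G(\pi_{c+1})$ Step-3-stretched, then $\pi_{c+1}>\pi_c$ by maximality, yet $\pi_c<\pi_{c+1}\le a$ would put value $\pi_c$ at a position $>c+1$ in $\pi$, impossible. I expect this last part --- ruling out \emph{new} overlaps by combining the ``swept-region'' bounds with the order-in-$\pi$ constraints coming from condition~\eqref{sw} --- to be the main obstacle; everything else is routine bookkeeping, carried out for \SW{} and transferred to the other drawings by relabeling.
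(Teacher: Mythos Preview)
Your plan is correct and can be completed, but it is organized differently from the paper's argument.

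The paper proceeds \emph{per crossing edge}: given $(i+1,i)\in\pv$ with, say, $i\in\pvprim$ (so $i=\pi_{j+1}$, $i+1=\pi_j$), it observes that Step~2 sets $\l(i)=1-j<2-j=\l(i+1)$, removing that crossing, and then asks which \emph{other} shapes can now touch $\G(i)$. The key shortcut is Lemma~\ref{prop:AwithoutC}: since $i\in\pvprim$ one has $i>a'$, so Step~3's downward stretches (which affect only values $\le a'$) cannot create new $y$-overlap with $\G(i)$; hence the only candidates are $\G(i-1)$ and $\G(i+1)$, and the $\G(i-1)$ case collapses by maximality of $\pvprim$ (if $i-1=\pi_{j+2}$ then $j+1<c$, else $\pi_{c+1}>\pi_c$ is violated). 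No global case split by ``stretch class'' is needed.

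Your route instead first proves an \emph{ownership} lemma --- every crossing edge has both endpoints in $\pvprim$ or both in $\phprim\setminus\pvprim$ --- and then does a global pair-of-classes analysis. The ownership lemma is a genuine (and correct) strengthening of condition~\eqref{sw}; your $c\ge 2$/maximality argument ruling out $i+1=\pi_c$, $i=\pi_{c+1}$ goes through. The within-class nesting is also fine. What the paper's approach buys is brevity: by localizing to one stretched shape and invoking Lemma~\ref{prop:AwithoutC} to bound its $y$-neighbours, it sidesteps your mixed-class analysis entirely. What your approach buys is a cleaner global structure: once ownership is in hand, ``both unstretched $\Rightarrow$ crossing edge $\Rightarrow$ both stretched'' is immediate, and the two within-class cases are symmetric.

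One caution for the part you flag as the obstacle: Step~2 does not merely \emph{add} material to $\G(\pi_i)$ --- the vertical bar moves from $x=i$ to $x=2-i$, so the new L is not a superset of the old one. Your principle ``overlap only in old material $\Rightarrow$ crossing edge'' therefore refers to the surviving horizontal segment at $x\in[i,i+1+\eps]$; stated that way it is correct, and the resulting stretched-vs-unstretched overlap forces $u=\pi_{c+1}=\pi_c-1$, contradicting maximality exactly as in your Step-2-vs-Step-3 example. If you want to shorten that part, the paper's use of $i>a'$ from Lemma~\ref{prop:AwithoutC} is the cleanest way.
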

\begin{proof}
First we observe that after Step 1 of the Algorithm, there is a crossing between two shapes $\G(i_1)$ and $\G(i_2)$ if and only if we can write $i_1=i$ and $i_2=i+2$ and $(i+1,i)\in \pv$. Consequently, we may write $i+1=\pi_j$ and $i=\pi_{j+1}$ for some $j$.  Since condition~\eqref{sw} is true, either $i\in \pvprim$ or $i+1\in\phprim$. First we consider the case when $i\in\pvprim$. After Step 2 is carried out, we have $\l(i)=2-j-1<2-j=\l(i+1)$ and since there was no vertical displacement of the bars, this alleviates the crossing between $\Gh(i)$ and $\Gv(i-1)$. It remains to show that the transformation did not induce any further crossings. Notice that even after Step 3, the only shapes which share an $x$-coordinate with $\G(i)$ are $\G(i-1)$ and $\G(i+1)$. This is because $i>a'$ where $\phprim\setminus\pvprim=[a']$, hence the modifications to $\G(\phprim\setminus\pvprim)$ (if any) stretch the shapes \emph{downwards} and thus $\projy(\phprim\setminus\pvprim)\cap\projy(i)$ is unaffected. Furthermore, $\G(i-1)$ intersects $\G(i)$ iff they share a vertical visibility, in which case we must have $i-1=\pi_{j+2}$ (since $i+1=\pi_j$) and, by Lemma \ref{lem:visibilitiesCorrectAlg}, $\G$ contains only the correct vertical visibilities. In this case, notice that $\pi_{j+2}\in \pvprim$ since $\pi_{j+1}\in\pvprim$ and by Lemma \ref{prop:reversed}, $\pi_1>\pi_2>\dots>\pi_c$ and $\pi_{j+2}=i-1<i=\pi_{j+1}$. This completes the proof if $i\in \pvprim$. If $i+1\in\phprim$ instead, the argument is similar, except we argue about Step 3 and the vertical displacement of $\G(i+1)$. 
\end{proof}

\subsubsection{Orderings of \texorpdfstring{$\Gamma$}{Gamma} and Two Properties of BVRs}

For a path $P=(v_1,\dots,v_n)$ and a BVR $\G$ of $P$, we say $\G$ is \df{monotonically increasing} (resp., decreasing) if $\r(v_j)<\r(v_{j+1})$ (resp.,  $\r(v_j)>\r(v_{j+1})$) for all $j\in[n-1]$.
If $\Gamma$ is monotonically increasing or decreasing we say it is \df{monotone}. 
We say $\G$ is \df{strictly increasing} (resp., \df{decreasing}) if $\G$ is monotonically increasing (resp., decreasing) and $X_\Gamma(v_j) \not\subseteq X_\Gamma(v_{j+1})$ (resp.,$X_\Gamma(v_{j+1}) \not\subseteq X_\Gamma(v_j)$) for all $j\in[n-1]$. 
For a BVR rotated by $\pi/2$ (with vertical bars and horizontal visibility), the same definitions apply with $\t(v)$ replacing $\r(v)$. 
A visibility representation in which no vertical or horizontal line contains the endpoints of two shapes from different vertices is called \emph{noncollinear}. Finally, a BVR which is noncollinear and monotone is called \emph{canonical}, and an LSVR $\G=\la\Gv,\Gh\ra$ is \emph{canonical} if $\Gv$ and $\Gh$ are both canonical.
We apply the same definitions and notation to subdrawings of an LSVR $\G$. That is, for a subset $S\subseteq V$, we say $\G(S)$ is monotone (or strictly increasing, etc) if the conditions are satisfied for the realizations of the vertices in $S$.

Before continuing on to the proof of necessity, we describe two basic results on the structure of BVRs. 
The first, Property~\ref{prop:cycle}, states a condition that implies the existence of a cycle, see Fig.~\ref{fig:cycle}. 

\begin{figure}[t]
\centering
\begin{subfigure}[t]{.45\textwidth}
\centering
\includegraphics{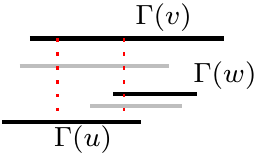}
\caption{Illustration for Property~\ref{prop:cycle}}
\label{fig:cycle}
\end{subfigure}
\hfill
\begin{subfigure}[t]{.45\textwidth}
\centering
\includegraphics{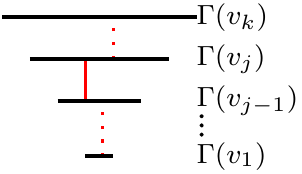}
\caption{Illustration for Property~\ref{prop:tower}}
\label{fig:tower}
\end{subfigure}
\caption{The gray lines in (a) represent other bars which may or may not be in the BVR, and the dotted (red) lines represent the two sequences of vertical visibilities.}
\label{fig:cycleAndTower}
\end{figure}

\begin{restatable}{property}{cycle}
\label{prop:cycle}
Let $\Gamma$ be a BVR of $G$.
If an endpoint of $X_\Gamma(w)$ is strictly contained in $X_\Gamma(u) \cap X_\Gamma(v)$ and $\bo(u) < \bo(w) < \bo(v)$ for
$u,v,w \in V$, then there is a cycle in $G$.
\end{restatable}
\begin{proof}
Since an endpoint of $X_\Gamma(w)$ is strictly contained
in $X_\Gamma(u) \cap X_\Gamma(v)$, there exists $x \in X_\Gamma(w)$
and
$x' \not\in X_\Gamma(w)$ such that $x,x' \in X_\Gamma(u) \cap
X_\Gamma(v)$.
By Property~\ref{prop:ThinOverlap},
there exists a path from $u$ to $v$ (following the vertical line through $x$) that, since $\bo(u) < \bo(w) < \bo(v)$,
contains $w$ and a path from $u$ to $v$ (following the vertical line through
$x'$) that does not contain $w$.  The union of these two
paths contains a cycle.
\end{proof}

The second property is that once the $x$-projection of a bar is contained in that of another this containment propagates, for any representation of a path. Fig.~\ref{fig:tower} provides an example. 

\begin{restatable}{property}{tower}
\label{prop:tower}
Let $\G$ be a noncollinear BVR of a path
$P=v_1,\dots,v_n$.
If $X_\G(v_j) \subset X_\G(v_k)$ for $j < k$, then
(i) $X_\G(v_1) \subset X_\G(v_2) \subset \dots \subset X_\G(v_j)$; and
(ii) $\bo(v_1), \dots , \bo(v_k)$ forms a strictly monotonic sequence.
\end{restatable}
\begin{proof}
By assumption we have that $X_\G(v_j) \subset X_\G(v_k)$, and since $\G$ is noncollinear,
we may assume that $\bo(v_j) < \bo(v_k)$ 
or $\bo(v_j) > \bo(v_k)$. Suppose it is the former; the argument is
symmetric in the other case.
Consider the largest $i<j$ such that
either $X_\G(v_i) \not\subset X_\G(v_{i+1})$
or
$\bo(v_i) > \bo(v_{i+1})$.
If $\bo(v_i) > \bo(v_{i+1})$, then $\G(v_i)$ and $\G(v_k)$ are both
above $\G(v_{i+1})$.
By Property~\ref{prop:nextBarsSameSide}, $X_\G(\{v_1, \dots, v_i\}) \cap
X_\G(\{v_k, \dots, v_n\}) = \emptyset$.
However, since $X_\G(v_i) \cap X_\G(v_{i+1}) \neq \emptyset$ (they
must share a visibility) and
$X_\G(v_{i+1}) \subset  X_\G(v_k)$, it follows that $ X_\G(v_i)\cap
X_\G(v_k) \neq \emptyset$, a contradiction.
Otherwise, if $X_\G(v_i) \not\subset X_\G(v_{i+1})$, then one of the
endpoints of $X_\G(v_{i+1})$ is contained in $X_\G(v_i)$ and
$X_\G(v_k)$.
By Property~\ref{prop:cycle}, there is a cycle, a contradiction. 
\end{proof}

\subsubsection{LSVR implies Condition~\eqref{sw}}

\label{sec:Necessity}
In this section we prove that 
if an LSVR of $\pvph$ exists, then condition~\eqref{sw} holds for at
least one of the four drawings \SW ,\NW, \SE, or \NE. The following
two lemmas allow us to concentrate only on canonical LSVRs.

\begin{restatable}{lemma}{monotone}
\label{lem:monotone}
If $\pvph$ has an LSVR then it has a canonical LSVR. 
\end{restatable}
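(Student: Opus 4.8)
\textbf{Proof proposal for Lemma~\ref{lem:monotone}.}

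The plan is to start from an arbitrary LSVR $\G=\la\Gv,\Gh\ra$ of $\pvph$ and transform it into a canonical one in two stages: first make it noncollinear, then make each of $\Gv$ and $\Gh$ monotone. Noncollinearity is the easy step: since there are finitely many shapes, we can perturb the $x$-coordinates (and independently the $y$-coordinates) of shape endpoints by generic infinitesimal amounts so that no vertical line passes through the endpoints of two distinct shapes and no horizontal line does either, all without changing which pairs of shapes see each other vertically or horizontally (visibilities are preserved under sufficiently small, order-preserving perturbations because every realized line-of-sight has positive clearance on at least one side in the strong-visibility model, and every blocked sight stays blocked). One must also preserve the L-shape constraint $\bv(v)=\lh(v)$; this is maintained by moving the shared corner of each $\G(v)$ as a unit. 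So after this step we may assume $\G$ is noncollinear.

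The harder step is monotonicity. Consider $\Gv$, a noncollinear BVR of the path $P_\V=(\pi_1,\dots,\pi_n)$. I want to show $\r(\pi_1),\dots,\r(\pi_n)$ (equivalently the right endpoints, or any consistent choice of endpoint) form a strictly monotone sequence, and symmetrically for $\Gh$. The key tool is Property~\ref{prop:tower}: in a noncollinear BVR of a path, once $X_\G(v_j)\subset X_\G(v_k)$ for $j<k$, the $x$-projections nest all the way up to $v_j$ and the bottoms $\b(v_1),\dots,\b(v_k)$ are strictly monotone. Combined with Property~\ref{prop:nextBarsSameSide} (consecutive subpaths of a path have disjoint $x$-projections once a bar is sandwiched) and Property~\ref{prop:cycle} (no cycles, since $P_\V$ is a path), one can argue that the sequence of, say, left endpoints $\l(\pi_1),\l(\pi_2),\dots$ cannot "turn around": if it increased then decreased (or vice versa) we would either get a containment $X_\G(\pi_j)\subset X_\G(\pi_k)$ that forces a nesting tower incompatible with the turnaround, or an endpoint of one projection strictly inside the intersection of two others with the bottoms in the wrong order, triggering Property~\ref{prop:cycle}. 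Thus $\Gv$ is monotone; symmetrically $\Gh$ is monotone using $y$-projections. Since we have already ensured noncollinearity, $\Gv$ and $\Gh$ are each canonical, so $\G$ is a canonical LSVR of $\pvph$.

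The main obstacle I anticipate is the monotonicity argument: ruling out a "turnaround" in the endpoint sequence cleanly. The subtlety is that $X_\G(\pi_j)\subset X_\G(\pi_k)$ versus the two projections merely overlapping with neither contained are different cases, and one must handle a local minimum/maximum of the endpoint sequence by carefully picking which two bars straddle a third and in which order their bottoms sit, so as to land exactly in the hypothesis of Property~\ref{prop:cycle} or Property~\ref{prop:tower}. I would isolate this as a small sublemma: \emph{a noncollinear BVR of a path is monotone}, proved by taking the first index where monotonicity of the chosen endpoint fails and deriving a contradiction via Properties~\ref{prop:cycle} and~\ref{prop:tower} exactly as in the proof of Property~\ref{prop:tower} itself. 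Everything else (the perturbation for noncollinearity, preserving the shared-corner constraint, and assembling the two halves) is routine.
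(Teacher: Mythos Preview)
Your proposed sublemma ``a noncollinear BVR of a path is monotone'' is false, and this is the central gap. Take the path $v_1,v_2,v_3$ with horizontal bars $\Gv(v_1)=[0,1]$ at height $0$, $\Gv(v_2)=[0.5,3]$ at height $1$, and $\Gv(v_3)=[2,2.5]$ at height $2$. This is a noncollinear BVR of the path (the only visibilities are $v_1v_2$ and $v_2v_3$), yet $\r(v_1)=1<\r(v_2)=3>\r(v_3)=2.5$, so it is not monotone. Properties~\ref{prop:cycle} and~\ref{prop:tower} give you nothing here: no projection is contained in another, and no endpoint of one bar lies in the intersection of the other two. The paper's proof does not claim monotonicity holds automatically; instead it \emph{modifies} the representation by stretching bars. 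In the configuration above (the paper's Case~1), one observes that no other bar can cover the point $\r(\pi_k)$ without giving $\pi_{k-1}$ degree three, so $\Gv(\pi_k)$ can be stretched rightward past $\r(\pi_{k-1})$ without creating unwanted visibilities. A separate Case~2 handles the situation $\r(\pi_k)<\l(\pi_{k-2})$ by reversing the order via a cascade of stretches, using Property~\ref{prop:tower} to guarantee the nesting that makes this safe. Your argument needs this constructive step.

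Your noncollinearity step is also too casual. In the strong-visibility model two bars that share only a boundary point \emph{do} see each other, so a realized line-of-sight need not have positive clearance on either side; a generic perturbation can destroy such a visibility. The paper handles this by a shift-and-stretch move: for a shared coordinate $x_0$, shift everything entirely left of $x_0$ leftward, everything entirely right of $x_0$ rightward, and stretch bars that straddle $x_0$ in both directions, then argue via Property~\ref{prop:cycle} that no unwanted visibility is created in the opened gap. Remaining coincidences of right (or left, top, bottom) endpoints are then removed one at a time by shrinking the extreme bar, again using Property~\ref{prop:cycle} to certify safety.
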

\begin{proof}
We begin by demonstrating that:  

\begin{claim}
\label{claim:noncollinear}
If $\pvph$ has an LSVR then it has a noncollinear LSVR.
\end{claim}

\begin{proof}
We first show how to transform an LSVR $\G$ of $\pvph$ into an LSVR $\Gp$ such that $\rp(i)\neq\lp(j)$ and $\bp(i)\neq\tp(j)$ for all $i,j\in V$.
Suppose that $\r(i) = \l(j) = x_0$ for some $i,j\in V$.
Let $L=\{k\in V:\r(k)\leq x_0\}$ and $R=\{k\in V:\l(k)\geq x_0\}$ be the collection of shapes to the left and right of $x_0$ respectively. Let $\delta$ be a positive number. Construct $\Gp$ as follows. 

Shift $\G(L)$ to the left by $\delta$, and $\G(R)$ to the right by $\delta$. For all $\ell\notin L\cup R$ (meaning that $\l(\ell)<x_0<\r(\ell)$), stretch $\G(\ell)$ both left and right by $\delta$ (so that $\lp(\ell)=\l(\ell)-\delta$ and $\rp(\ell)=\r(\ell)+\delta$). Note that $\rp(i)<\lp(j)$.

We claim that $\Gp$ is an LSVR of $\pvph$. Since there was no vertical displacement, the horizontal visibilities present in $\G$ are unaffected. Moreover, the structure of the drawing to the right of (and including) $x_0$ in $\G$ was unchanged in $\Gp$. Formally, $\Gp\cap [x_0+\delta,\infty)\times(-\infty,\infty)$ is precisely $\G\cap[x_0,\infty)\times(-\infty,\infty)$ shifted by $\delta$.
Similarly for the structure to the left of $x_0$ in $\G$.
Since $\Ghp(v)$ lies outside the region $M=(x_0-\delta,x_0+\delta)\times (-\infty,\infty)$ for all $v \in V$, we introduce no crossings.

It remains only to show that no unwanted vertical visibilities are introduced in $M$. Notice that any unwanted vertical visibilities in this region must be among shapes not in $\Gp(L \cup R)$ (i.e., those which were stretched), since $\Gp(L\cup R)\cap  M=\emptyset$. If $\Gvp(u)$ and $\Gvp(v)$ share an unwanted vertical visibility in $M$
then, since the visibility was blocked by two horizontal bars, say $\Gv(i')$ and $\Gv(j')$, that shared a collinearity at $x_0$ in $\G$, $\projx(u)$ and $\projx(v)$ both strictly contain $x_0$ and $\Gv(u)$ is above $\Gv(i')$ and $\Gv(j')$ while $\Gv(v)$ is below them. However, this implies a cycle by Property~\ref{prop:cycle}, a contradiction. This demonstrates that $\Gp$ is an LSVR of $\pvph$.

If instead $\bp(i)=\tp(j)$, we perform a similar transformation, but with the geometry rotated by $\pi/2$.
Repeating this process produces
%
an LSVR such that $\r(i)\neq\l(j)$ and $\b(i)\neq\t(j)$ for all $i,j\in V$.

To remove any remaining collinearity,
suppose $\r(i)=\r(j)=x_0$ for some $i,j \in V$ (the other cases $\b(i)=\b(j)$, $\t(i)=\t(j)$, and $\l(i)=\l(j)$ are symmetric) 
and choose $i$ and $j$ such that $\Gv(i)$ is the highest such bar and $\Gv(j)$ is the lowest.
By Property~\ref{prop:cycle}, there cannot be a bar $\Gv(u)$ above $\Gv(i)$ and a bar $\Gv(v)$ below $\Gv(j)$ such that $x_0 \in \projx(u)$ and $x_0 \in \projx(v)$.
If such a $\Gv(u)$ does not exist, decrease $\r(i)$ by a small amount.
This does not introduce any new vertical visibility since $\Gv(i)$ is the highest bar with $x_0 \in \projx(i)$.
Similarly, decrease $\r(j)$ if such a $\Gv(v)$ does not exist.
Repeating this process iteratively produces a noncollinear LSVR of $\pvph$.
\end{proof}

We now proceed to the main proof of Lemma~\ref{lem:monotone}. 
By Claim~\ref{claim:noncollinear}, we may assume that $\G$ is noncollinear.
Suppose without loss of generality that $\Gv(\{\pi_1, \ldots, \pi_{k-1}\})$ is monotone increasing for some $k \geq 3$.
If $\r(\pi_{k}) < \r(\pi_{k-1})$, we will show how to modify $\G$ such that $\Gv(\{\pi_1,\dots,\pi_k\})$ is monotonically increasing or decreasing.
We consider three cases:

Case 1:
$\r(\pi_{k-2}) < \l(\pi_k)$.
There is no vertex $u$ such that $ \r(\pi_k) \in X_\G(u)$ as otherwise, by Property~\ref{prop:ThinOverlap} (with $x=\r(\pi_k)+\epsilon$),
there is a path from $u$ to $\pi_{k-1}$ avoiding $\pi_k$ and $\pi_{k-2}$,
implying that $\pi_{k-1}$ has degree three in $P_\V$; a contradiction. Hence, we may stretch $\Gv(\pi_k)$ to the right such that $\r(\pi_k)>\r(\pi_{k-1})$, making $\Gv(\{\pi_1,\dots,\pi_k\})$ monotonically increasing. 

Case 2:
$\r(\pi_k) < \l(\pi_{k-2})$.
Since $\Gv(\pi_k)$ and $\Gv(\pi_{k-1})$ must share a visibility, we have that $\l(\pi_{k-1})<\l(\pi_{k-2})$, hence $X_\G(\pi_{k-2}) \subset X_\G(\pi_{k-1})$.
By Property~\ref{prop:tower}(i), $X_\G(\pi_1) \subset X_\G(\pi_2) \subset \dots \subset X_\G(\pi_{k-1})$.
Successively, for $j$ from $k-2$ to 1, stretch $\Gv(\pi_j)$ to the right such that $\r(\pi_j)>\r(\pi_{j+1})$. Note that before each stretch, $\r(\pi_{j+1})$ is the rightmost point in $\Gv(\{\pi_1,\dots,\pi_k\})$; otherwise, as in Case 1, $\pi_{j+1}$ has degree three.
Thus, the transformation induces no unwanted vertical visibilities. Furthermore, it's clear that the horizontal visibilities are maintained since there is no movement of any vertical bars. $\Gv(\{\pi_1,\dots,\pi_k\})$ is now monotonically decreasing, which completes the proof of this case. 

This completes the proof if $\Gv(\{\pi_1,\dots,\pi_{k-1}\}$ is monotonically increasing. If $\Gv(\{\pi_1, \dots,\pi_{k-1}\})$ is monotonically decreasing, Property~\ref{prop:cycle}
(for $\pi_k$, $\pi_{k-1}$, and $\pi_{k-2}$)
implies a cycle if $\r(\pi_k) > \r(\pi_{k-1})$. Therefore, $\G_\V(\{\pi_1,\dots,\pi_k\})$ is monotonically decreasing. A similar and symmetric argument may be applied to $\Gh$.
\end{proof}

\comment{
Observe that in the transformations depicted in Fig.~\ref{fig:transformations}, $\Gv(\{i,i+1\})$ and $\Gh(\{i,i+1\})$ are altered from being strictly increasing to simply monotonically increasing. In order to determine when two L-shapes can be ``uncrossed,'' therefore, we first determine in which parts of the drawing $\Gh$ and $\Gv$ are required to be strictly increasing or decreasing. 
}


The preceding lemmas establish that if $\pvph$ has an LSVR then it has a canonical LSVR.  The following lemma connects the existence of a canonical LSVR to the condition that allows the \lsvrpaths Algorithm to stretch its initial drawing to eliminate crossings.
If the algorithm is unable to eliminate these crossings then no canonical LSVR for $\pvph$ exists and hence no LSVR exists.

\begin{restatable}{lemma}{strict}
\label{lem:strict}
Suppose $\G$ is a canonical LSVR of $\pvph$.
If $\Gv$ is monotonically increasing
then $\Gv(\{\pi_c,\pi_{c+1},\dots,\pi_n\})$
is strictly increasing.
Similarly, if $\Gh$ is monotonically increasing 
then $\Gh(\{a,a+1,\dots,n\})$
is strictly increasing,
where $a = |\phprim|$
and $c = |\pvprim|$.
\end{restatable}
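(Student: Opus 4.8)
\textbf{Proof plan for Lemma~\ref{lem:strict}.}

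The plan is to prove the statement about $\Gv$; the statement about $\Gh$ follows by the rotational symmetry between the two paths. Assume $\Gv$ is monotonically increasing, so $\r(\pi_1) < \r(\pi_2) < \dots < \r(\pi_n)$, and $\Gv$ is noncollinear (since $\G$ is canonical). I want to show that for every $j \geq c$ we have $X_\G(\pi_j) \not\subseteq X_\G(\pi_{j+1})$, which combined with monotonicity is exactly the definition of strictly increasing on $\{\pi_c, \dots, \pi_n\}$. The key tool is Property~\ref{prop:tower}: if $X_\G(\pi_j) \subseteq X_\G(\pi_{j+1})$ for some $j \geq c$ (the containment is strict by noncollinearity, so $X_\G(\pi_j) \subset X_\G(\pi_{j+1})$), then part (i) forces $X_\G(\pi_1) \subset X_\G(\pi_2) \subset \dots \subset X_\G(\pi_j)$, a chain of $j$ nested $x$-projections.

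The heart of the argument is then to show that such a nested chain forces $(\pi_1, \pi_2, \dots, \pi_j)$ — or rather the underlying vertex labels — to be monotonic in $[n] = (1, 2, \dots, n)$, i.e. to be a monotonic subsequence of $P_\H$'s natural order, which since it is a \emph{prefix} of $\pi$ contradicts the maximality of $\pvprim$ as defined in Definition~\ref{def:crossing_parts} (recall $c = |\pvprim|$, so the longest prefix of $\pi$ monotonic in $[n]$ has length $c$, and here we would have one of length $j \geq c$, hence $j = c$ is forced and then in fact we need a contradiction to rule out $j=c$ too — see below). To get monotonicity of the prefix in $[n]$: consecutive bars $\Gv(\pi_i)$ and $\Gv(\pi_{i+1})$ share a vertical visibility, so in particular when $X_\G(\pi_i) \subset X_\G(\pi_{i+1})$, the horizontal bar $\Gv(\pi_i)$ sits entirely above or entirely below $\Gv(\pi_{i+1})$ within the latter's $x$-range. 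Moreover, because $\G$ is an LSVR the vertical bar $\Gh(\pi_i)$ has its bottom endpoint at $\l(\pi_i) = \b_{\Gv}(\pi_i)$-coordinate... more precisely I will use that $\Gh$ shares endpoints with $\Gv$ and that the nesting of $x$-projections together with the visibility structure of the path $P_\H = (1,\dots,n)$ forces the $\Gh$-bars corresponding to the chain to be horizontally ordered consistently with their labels; Property~\ref{prop:cycle} rules out any "reversal" that would create a cycle in $P_\V$. Combining these, the labels $\pi_1, \dots, \pi_j$ must appear in increasing or decreasing order in $(1, \dots, n)$.

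Finally I handle the boundary case $j = c$: if $X_\G(\pi_c) \subset X_\G(\pi_{c+1})$, Property~\ref{prop:tower}(i) gives the nested chain on $\pi_1, \dots, \pi_c$, which we can extend — since $\pi_{c+1}$ caps the chain, the argument above applied to $\pi_1, \dots, \pi_c, \pi_{c+1}$ shows $(\pi_1, \dots, \pi_{c+1})$ is monotonic in $[n]$, contradicting the maximality of $\pvprim = (\pi_1, \dots, \pi_c)$ directly. Hence no such containment occurs for any $j \geq c$, and $\Gv(\{\pi_c, \dots, \pi_n\})$ is strictly increasing. I expect the main obstacle to be the middle step: cleanly deducing from "nested $x$-projections in a canonical LSVR of a path" that "the vertex labels of that nested chain are monotonic in the labeling induced by $P_\H$," since this is where the interaction between the two paths (not just one BVR) genuinely enters, and where Property~\ref{prop:cycle} and the shared-endpoint condition $\bv(v) = \lh(v)$ must be used carefully to exclude the "twisted" configurations.
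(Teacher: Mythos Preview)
Your reduction step is sound and in fact equivalent to part of the paper's argument: using Property~\ref{prop:tower} to turn a failure of strict increase at some $j\geq c$ into a nested chain $X_\G(\pi_1)\subset\cdots\subset X_\G(\pi_{j+1})$ accomplishes exactly what the paper's first claim (``strictly increasing propagates forward'') does in contrapositive form. So the overall shape of the plan---reduce everything to the configuration at index $c$---is right.

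The genuine gap is the step you yourself flag as the obstacle: \emph{nested chain $\Rightarrow$ $(\pi_1,\dots,\pi_{j+1})$ monotonic in $[n]$}. This is not what the paper proves, and your sketch (``$\Gh$ shares endpoints with $\Gv$\dots Property~\ref{prop:cycle} rules out reversals'') does not yet constitute an argument. The paper's route at the boundary is different and more concrete: it \emph{uses} the already-known monotonicity of $(\pi_1,\dots,\pi_c)$ (this is the definition of $\pvprim$) as an \emph{input}, splits into the two cases $\pi_1<\cdots<\pi_c$ and $\pi_1>\cdots>\pi_c$, and in each case exploits the shared-corner constraint of the $\llcorner$-shapes together with the monotonicity of $\Gh$ to force either an intersection of two L's or a cycle (via Property~\ref{prop:cycle}). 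The conclusion is that nesting at $c$ is \emph{impossible}, not that it would extend monotonicity to $\pi_{c+1}$.

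In other words, you are trying to show ``nested $\Rightarrow$ monotonic prefix of length $c{+}1$'' and then contradict maximality; the paper shows ``monotonic prefix of length $c$ (given) $+$ nested $\Rightarrow$ geometric contradiction.'' Your target is the stronger (and as stated, unproven) implication. To complete your plan you would essentially have to carry out the paper's two-case geometric analysis anyway; the ``monotonic in $[n]$'' conclusion is not an intermediate step that falls out naturally. I would recommend restructuring the boundary argument along the paper's lines: assume $X_\G(\pi_c)\subset X_\G(\pi_{c+1})$, invoke Property~\ref{prop:tower} to get both the nested chain and the monotone $\bo$-sequence, then do the case split on the direction of $\pvprim$ and hunt for the forced L-intersection or cycle in each case.
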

\begin{proof}
Let $\G$ be a canonical BVR of a path $P=(v_1,\dots,v_n)$.
We begin by showing: 

\begin{claim}If $\G(\{v_i,v_{i+1}\})$  is strictly increasing (resp., decreasing) then\\ $\G(\{v_i,v_{i+1},\ldots,v_n\})$  is strictly increasing (resp., decreasing).
\end{claim}

\begin{proof}
Suppose that $\G(\{v_i,v_{i+1}\})$ is strictly increasing (the case of strictly decreasing is similar) and let $j> i$ be minimal such that $\G(\{v_j,v_{j+1}\})$ is not.
Since $\G$ is monotone and noncollinear, this implies $\l(v_{j+1}) < \l(v_j)$ and $\projx(v_{j})\subset\projx(v_{j+1})$.
Suppose that $\G(v_{j-1})$ is above $\G(v_{j})$; the other case is argued similarly.  If $\G(v_{j+1})$ is above $\G(v_{j-1})$ or below $\G(v_{j})$, applying Property~\ref{prop:cycle} to $v_{j-1},v_{j}$ and $v_{j+1}$ yields a contradiction. On the other hand, if $\G(v_{j+1})$ is between $\G(v_{j-1})$ and $\G(v_j)$, then because $\projx(v_{j})\subset\projx(v_{j+1})$, $\G(v_{j-1})$ and $\G(v_{j})$ cannot share a visibility. This is a contradiction. 
\end{proof}

We now show that: 

\begin{claim}
If $\G$ is a canonical LSVR of $\pvph$ and $\Gv$ is monotonically increasing
then $\Gv(\{\pi_c,\pi_{c+1}\})$
is strictly increasing.
Similarly, if $\Gh$ is monotonically increasing 
then $\Gh(\{a,a+1\})$
is strictly increasing.
\end{claim}

\begin{proof}
Consider $\Gv$, the proof for $\Gh$ can be obtained by a symmetric argument. We will assume that both $\Gv$ and $\Gh$ are monotonically increasing.
Assume for contradiction that $\Gv(\{\pi_c,\pi_{c+1}\})$ is not strictly increasing, so $\l(\pi_{c+1})<\l(\pi_{c})$. Notice that, since $\Gv$ is monotonically increasing, $ X_\G(\pi_{c})\subset X_\G(\pi_{c+1})$. Applying Property~\ref{prop:tower}(i) we see that $X_\G(\pi_i)\subset X_\G(\pi_{i+1})$ for $i \in [c]$. 
We consider two cases based on the ordering of $(\pi_1,\dots,\pi_c)$ in $\ph$ (see Figure~\ref{fig:strict}). 

\begin{figure}
\centering
\begin{subfigure}[t]{0.2\textwidth}
\includegraphics[width=\textwidth]{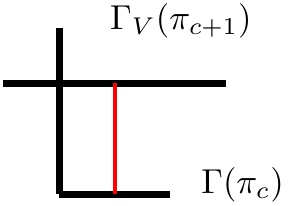}
\caption{}
\end{subfigure}
\hspace{0.2cm}
\begin{subfigure}[t]{0.2\textwidth}
\includegraphics[width=\textwidth]{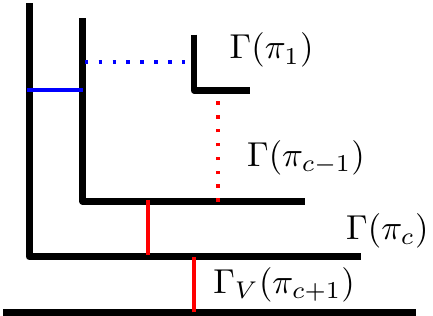}
\caption{}
\end{subfigure}
\hspace{0.2cm}
\begin{subfigure}[t]{0.2\textwidth}
\includegraphics[width=\textwidth]{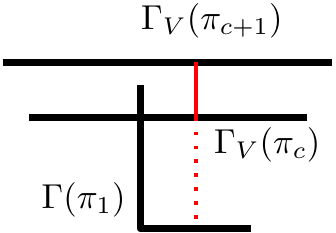}
\caption{}
\end{subfigure}
\hspace{0.2cm}
\begin{subfigure}[t]{0.2\textwidth}
\includegraphics[width=\textwidth]{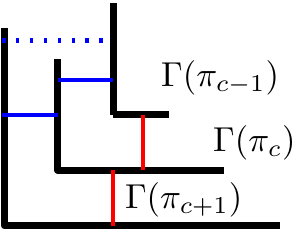}
\caption{}
\end{subfigure}
\caption{The cases in the proof of Lemma \ref{lem:strict}. (a) and (b) correspond to the two subcases of Case 1; (b) and (d) to the two subcases of Case 2. }
\label{fig:strict}
\end{figure}

Case 1: 
Suppose $\pi_1<\pi_2<\dots<\pi_c$. Then $\t(\pi_1)<\t(\pi_2)<\dots<\t(\pi_c)$ since $(\pi_1, \pi_2, \dots, \pi_c)$ is increasing in $P_\H$ and $\Gh$ is monotonically increasing.
By definition of $\pvprim$, $\pi_{c+1}<\pi_{c}$, hence $\t(\pi_{c+1})<\t(\pi_{c})$.
Therefore, $\Gv(\pi_c)$ is above $\Gv(\pi_{c+1})$ (i.e., $\b(\pi_c) > \b(\pi_{c+1})$);
otherwise $\Gh(\pi_c)$ would intersect $\Gv(\pi_{c+1})$ (by the assumption that $\l(\pi_{c+1})<\l(\pi_{c})$).
Hence, by Property~\ref{prop:tower}(ii), $\b(\pi_1)>\b(\pi_2)>\dots>\b(\pi_{c+1})$ thus demonstrating that $\G(\pi_i)$ is nested in $\G(\pi_{i+1})$ for all $i\in[c-1]$. 

First, we claim that for all $i\in\{\pi_1, \pi_1 + 1, \dots, \pi_c \}$, $\l(i)\in[\l(\pi_c),\l(\pi_1)]$.
Suppose the claim is false and choose the smallest $i\in\{\pi_1, \pi_1 + 1, \dots,\pi_c\}$ such that
either $\l(i)<\l(\pi_c)$ or $\l(i)>\l(\pi_1)$. In the former case, $\Gh(i)$ is blocked from sharing a visibility with $\Gh(i-1)$ by $\Gh(\pi_c)$
(where we're using that $\t(i-1),\t(i)<\t(\pi_c)$, and $\b(i-1) > \b(\pi_c)$ to avoid $\Gh(i-1)$ and $\Gv(\pi_c)$ intersecting).
In the latter case apply Property~\ref{prop:cycle} to $i,\pi_1$ and $\pi_c$ (using that $\t(\pi_c)>\t(i)>\t(\pi_1)$) to obtain a contradiction.
This proves the claim, which implies that $\pi_{c+1}<\pi_1$, since $\pi_{c+1} < \pi_c$ and $\l(\pi_{c+1})\notin[\l(\pi_c), \l(\pi_1)]$.
This, however, also yields a contradiction: Since $\projy(\pi_1)\subset\projy(\pi_{c})$ and $\l(\pi_{c})<\l(\pi_1)$, Property~\ref{prop:tower} implies that $\l(i)>\l(\pi_c)$ for all $i<\pi_1$. By assumption however, $\b(\pi_{c+1})<\b(\pi_c)$. 

Case 2: 
Otherwise, $\pi_1>\pi_2>\dots>\pi_c$ so $\t(\pi_1)>\t(\pi_2)>\dots>\t(\pi_c)$. If $\b(\pi_{c+1})<\b(\pi_c)$, then we must have $c=1$;
otherwise apply Property~\ref{prop:cycle} to $\pi_{c+1},\pi_{c-1}$ and $\pi_c$. 
This, however, contradicts the definition of $\pvprim$ as it will always have length at least two.
Thus $\b(\pi_c)<\b(\pi_{c+1})$ and again by Property~\ref{prop:tower}(ii) we have $\b(\pi_1)<\dots<\b(\pi_{c+1})$.
Again, however, this forces $c=1$. Otherwise, because $\l(\pi_c)<\l(\pi_1)$ and $\t(\pi_1)>\t(\pi_c)$, $\Gh(\pi_1)$ would intersect $\Gv(\pi_c)$. As above, $c=1$ is impossible. 

This completes the proof if $\Gv$ is increasing. See Fig.~\ref{fig:strict} for an illustration of the different cases. If $\Gv$ is decreasing, the argument is similar but considers instead the vertices $\pi_{d-1},\pi_d,\dots,\pi_n$. The same geometric arguments apply. 
\end{proof}

Combining these two claims provide the proof of Lemma~\ref{lem:strict}. 
\end{proof}

We can now prove that if an LSVR of $\pvph$ exists then
condition~\eqref{sw} holds for at least one drawing \SW{},\NW{},\SE{},
or \NE{}, which completes the proof of necessity. 


By Lemma~\ref{lem:monotone}, we may assume that if an LSVR of $\pvph$
exists then there is an LSVR $\G$ that is monotone and noncollinear.
We claim that if condition~\eqref{sw} doesn't hold for
the \SW{} drawing then no LSVR $\G$ exists in which $\Gv$ and $\Gh$
are both monotonically increasing.
In a similar manner, if condition~\eqref{sw} doesn't hold for
the \NW{}, \SE{}, or \NE{} drawing then
no LSVR $\G$ exists in which 
$\Gv$ is increasing and $\Gh$ is decreasing,
$\Gv$ is decreasing and $\Gh$ increasing, or
$\Gv$ and $\Gh$ are both decreasing.
Since these are the only four possibilities for a monotone LSVR, by Lemma \ref{lem:monotone} if none of these four monotonic LSVRs exist, then no LSVR of $\pvph$ exists. 
Suppose $\G$ is an LSVR of $\pvph$ in which $\Gv$ and $\Gh$ are
monotonically increasing and let $i$ be an index where
condition~\eqref{sw} fails: $(i+1,i)\in P_\V$, $i\notin \pvprim$, and
$i+1\notin \phprim$. Let $a=|\phprim|$.
By Lemma \ref{lem:strict}, $\Gh(\{a,a+1,\dots,n\})$ is strictly
increasing. Thus, since $i\geq a$ (because $i+1\notin\phprim$),
$\b(i)<\b(i+1)<\t(i)$. By similar reasoning, we obtain that
$\Gv(\{i+1,i\})$ is strictly increasing, so
$\l(i+1)<\l(i)<\r(i+1)$. However, this is an impossible configuration
to realize without an intersection between $\Gh(i)$ and
$\Gv(i+1)$. Therefore no such LSVR exists.

\section{Conclusion}
We show that deciding if a pair of undirected graphs $\la G_\V, G_\H \ra$,
which share the same vertex set, has a simultaneous visibility representation using unit squares or rectangles is \NP-complete.
The graphs used in the hardness reduction are only slightly more complex than paths.
We also show an efficient algorithm to decide if pairs of paths have simultaneous visibility representations using fixed orientation L-shapes.
This narrows the range of possible graph families for which determining simultaneous visibility representation is non-trivial yet not \NP-hard.
Perhaps the most interesting open questions for this representation involve identifying families of pairs of graphs that lie within this range.
Pairs of graphs with more structure than simple paths are certainly realizable.
Characterizing such pairs is also an interesting open problem.


\small
\bibliographystyle{abbrv}
\bibliography{ref.bib}

\end{document}